\documentclass[reqno]{amsart}

\usepackage{a4wide, mathrsfs}
\usepackage{hyperref}

\title{Areas and volumes for null cones}

\numberwithin{equation}{section}

\theoremstyle{plain}
\newtheorem{theorem}{Theorem}[section]
\newtheorem{corollary}[theorem]{Corollary}
\newtheorem{proposition}[theorem]{Proposition}
\newtheorem{lemma}[theorem]{Lemma}

\theoremstyle{definition}
\newtheorem{definition}[theorem]{Definition}
\newtheorem{example}[theorem]{Example}

\theoremstyle{remark}
\newtheorem{remark}[theorem]{Remark}

\newcommand\g{\ensuremath{\mathbf{g}}}
\newcommand\T{\ensuremath{\mathbf{T}}}

\newcommand\bs{\ensuremath{\boldsymbol{\sigma}}}
\newcommand\bS{\ensuremath{\mathbf{S}}}

\renewcommand\l{\ensuremath{\boldsymbol{\ell}}}
\newcommand\n{\ensuremath{\mathbf{n}}}
\newcommand\m{\ensuremath{\mathbf{m}}}
\newcommand\mbar{\ensuremath{\overline{\m}}}

\newcommand\Ric{\ensuremath{\mathbf{Ric}}}
\newcommand\R{\ensuremath{\mathbb{R}}}
\newcommand\sn{\ensuremath{\mathrm{sn}}}
\newcommand\Id{\ensuremath{\mathrm{Id}}}

\newcommand\Ncal{\ensuremath{\mathcal{N}}}
\newcommand\Scal{\ensuremath{\mathcal{S}}}

\newcommand\eps{\ensuremath{\varepsilon}}

\newcommand\bel[1]{\begin{equation}\label{#1}}
\newcommand\ee{\end{equation}}

\newcommand\bracket[2]{\ensuremath{\left\{ #1 \, \left| \vphantom{|^|} \, #2 \right.\right\}}}

\DeclareMathOperator{\tr}{tr}

\newcommand\vs{\vskip .2cm}

\newcommand\gammad{\ensuremath{{\gamma}^{\prime}}}

\begin{document}
\author{James~D.E.\ Grant}
\address{Institut f\"{u}r Grundlagen der Bauingenieurwissenschaften\\
Leopold-Franzens-Universit\"{a}t Innsbruck\\
Technikerstrasse 13\\ 6020 Innsbruck\\ Austria}
\address{Current address: \href{http://www.mat.univie.ac.at/home.php}{Fakult{\"a}t f{\"u}r Mathematik} \\
\href{http://www.univie.ac.at/de/}{Universit{\"a}t Wien} \\ Nordbergstrasse 15 \\ 1090 Wien \\ Austria}
\email{\href{mailto:james.grant@univie.ac.at}{james.grant@univie.ac.at}}
\urladdr{\href{http://jdegrant.wordpress.com}{http://jdegrant.wordpress.com}}
\thanks{This work was supported by START-project Y237--N13 of the \href{http://www.fwf.ac.at/}{Austrian Science Fund} and by the Agence Nationale de la Recherche (ANR) through the grant 06-2-134423, ``Mathematical Methods in General Relativity'' (MATH-GR). The author is grateful to Universit\'{e} Pierre et Marie Curie (Paris 6) for their hospitality during the completion of this work, and to Prof.~P.\ Chru{\'s}ciel for comments on a preliminary version of this paper.}

\subjclass[2010]{53C23, 53C80}
\keywords{Volume comparison, area comparison, monotonicity properties}

\begin{abstract}
Motivated by recent work of Choquet-Bruhat, Chru{\'s}ciel, and Mart{\'{\i}}n-Garc{\'{\i}}a~\cite{LightCone}, we prove monotonicity properties and comparison results for the area of slices of the null cone of a point in a Lorentzian manifold. We also prove volume comparison results for subsets of the null cone analogous to the Bishop--Gromov relative volume monotonicity theorem and G\"{u}nther's volume comparison theorem. We briefly discuss how these estimates may be used to control the null second fundamental form of slices of the null cone in Ricci-flat Lorentzian four-manifolds with null curvature bounded above.
\end{abstract}
\maketitle
\thispagestyle{empty}

\section{Introduction}
The application of comparison techniques to problems in Riemannian geometry is now well-established. More recently, there has been a significant application of comparison-theoretic machinery to specific problems in Lorentzian geometry, such as volume comparison theorems, and related rigidity results.\footnote{See, e.g.,~\cite{EhrlichKim} for a recent review. In addition, our approach was significantly influenced by~\cite{ES}.} A new type of comparison theorem in Lorentzian geometry was given in a recent paper~\cite{LightCone}, where the authors showed that the area of the cross-sections of a light-cone in a Lorentzian manifold satisfying the Dominant Energy condition are bounded above by areas of corresponding sections in Minkowski space. This result is reminiscent of the area and volume comparison theorems in Riemannian geometry, such as the Bishop comparison theorem, where one compares the volume of a metric ball in a Riemannian manifold with Ricci curvature bounded below with the volume of a ball of the same radius in the corresponding constant curvature space. The current paper arose from the wish to generalise the considerations of~\cite{LightCone} by developing null analogues of other Riemannian comparison results. We first show that the results of~\cite{LightCone} may, in one sense, be strengthened, to show that the ratio of the area of cross-sections of the null cone in a manifold with curvature bounded below to a specific quantity determined in terms of the curvature bound satisfies a monotonicity property. The result of~\cite{LightCone} arises as a special case of this monotonicity result. Using a simple result from~\cite{CGT}, we then make the simple deduction that this area monotonicity result leads to a relative null volume monotonicity result analogous to the Bishop--Gromov volume comparison theorem.

In an alternative direction, we show that, assuming an \emph{upper bound\/} on the null curvature along the null cone, one may deduce an alternative area-monotonicity result, which gives a \emph{lower bound\/} on the cross-sectional area of the light-cone. Integrating this theorem gives a lower bound on the null volume of a subset of the null cone. This result may, in essence, be viewed as an analogue of G\"{u}nther's volume comparison theorem in Riemannian geometry. Unlike the case with curvature bounded below, this result requires the analysis of a matrix Riccati equation, rather than a scalar Riccati equation.

Finally, we briefly investigate some model Lorentzian geometries for which our comparison results are sharp. Unlike many standard comparison constructions, our model geometries are not unique, and we do not have rigidity results in the cases where our inequalities are saturated.\footnote{At least, not without the imposition of additional conditions on the model geometries.} We also briefly discuss how our results may be used to control the mean curvature of the slices of a null cone in a four-dimensional Ricci-flat four-manifold in terms of the \lq\lq area radius\rq\rq\ and \lq\lq volume radius\rq\rq.

\vs
This paper is organised as follows. In the following section, we recall basic material concerning the geometry of null cones. In Section~\ref{sec:Riccati}, we develop Riccati equation techniques that allow us to estimate the null second fundamental form of a slice of the null cone under various types of curvature bound. In Section~\ref{sec:comparison}, the results of this Riccati equation analysis are applied to derive a monotonicity result for the area of a slice of the null cone. {}From this result, we directly derive a volume monotonicity result, somewhat analogous to the Bishop--Gromov volume comparison result. Both of these results require a lower bound on the Ricci tensor along the null cone. Assuming an upper bound on the curvature along the null cone, we derive a corresponding area monotonicity result, and an analogue of the G\"{u}nther volume comparison result. In Section~\ref{sec:4d}, we discuss an application of our results to the estimation of the null mean curvature of spheres in terms of the \lq\lq area radius\rq\rq\ and \lq\lq volume radius\rq\rq\ for four-dimensional, Ricci-flat metrics. In Section~\ref{sec:model}, we recast our results in terms of model geometries, both Riemannian and Lorentzian. Finally, for the convenience of readers familiar with this notation, we outline in an appendix how our results appear in four dimensions, when carried out in Newman--Penrose formalism. With the exception of this appendix, this paper is essentially self-contained.

\section{Background material and notation}
\label{sec:basic}

Let $(M, \g)$ be a smooth, time-oriented Lorentzian manifold of dimension $n+1$, with the metric $\g$ having signature $(-, +, \dots, +)$. We assume that $(M, \g)$ is geodesically complete. Let $p \in M$ and let $\Ncal^+(p)$ denote the future null cone of the point $p$. Given a unit-length, future-directed, time-like vector $\T \in T_p M$, we define $S^+_1(0) \subseteq T_p M$ as the set of future-directed, null vectors $\l \in T_p M$ that satisfy the normalisation condition
\bel{nullnormalisation}
\g(\T, \l) = -1.
\ee
Given $\l \in S^+_1(0)$, we denote by $\gamma_{\l} \colon [0, \infty) \to M$ the future-directed, affinely-parametrised geodesic such that $\gamma_{\l}(0) = p$, ${\gamma}^{\prime}_{\l}(0) = \l$. We define
\[
\Scal_s := \bracket{\gamma_{\l}(s)}{\l \in S^+_1(0)},
\]
and the set
\[
\Ncal^+_s(p) := \bigcup_{0 \le t \le s} \Scal_s.
\]
Except briefly in \S7, we will assume that $s > 0$ is less than the null injectivity radius at $p$, in which case the set $\Scal_s$ is a smoothly embedded $(n-1)$-dimensional sphere in $M$ and $\Ncal^+_s(p) \subset \Ncal^+(p)$. The sphere $\Scal_s$ inherits an induced Riemannian metric, which we denote by $\bs_s$. We denote the area of the set $\Scal_s$ with respect to the metric $\bs_s$ by
\[
|\Scal_s|_{\g} = \int_{\Scal_s} \, dV_{\bs_s}.
\]

In a slight abuse of notation, we will also use $\l$ to denote the tangent vector field ${\gamma}^{\prime}_{\l}$ defined on the set $\Ncal^+_s(p)$. Given a tensor or scalar field defined along the null cone, $\rho$, we will denote its covariant derivative along the null geodesics that generate $\Ncal^+_s(p)$ by $\rho' \equiv \nabla_{\l} \rho \equiv \nabla_{{\gamma}^{\prime}_{\l}} \rho$. In terms of the vector field $\l$, we define the null shape operator of $\Scal_s$, $\bS \colon \mathfrak{X}(\Scal_s) \to \mathfrak{X}(\Scal_s)$ by\footnote{Throughout, we will use the notation $\langle \mathbf{u}, \mathbf{v} \rangle \equiv \g(\mathbf{u}, \mathbf{v})$ to refer to the inner product with respect to the Lorentzian metric $\g$.}
\[
\langle \mathbf{u}, \bS(\mathbf{v}) \rangle := \langle \mathbf{u}, \nabla_{\mathbf{v}} \l \rangle,
\]
and the corresponding null mean curvature
\[
H := \frac{1}{n-1} \tr\bS.
\]
A standard result is that the derivative with respect to $s$ of this area is given by
\bel{ddsSs}
\frac{d}{ds} |\Scal_s|_{\g} = \int_{\Scal_s} \tr\bS \, dV_{\bs_s} = (n-1) \int_{\Scal_s} H \, dV_{\bs_s}.
\ee

\begin{example}
In flat Minkowski space $\R^{n, 1}$, letting $p$ lie at the origin, the sphere $\Scal_s$ is the set
\[
\Scal_s = \left\{ t = r = s \right\},
\]
with area
\[
|\Scal_s|_{\g} = \omega_{n-1} s^{n-1},
\]
where $\omega_{n-1}$ denotes the area of the unit sphere in $\R^n$. A straightforward calculation yields that
\[
\bS(\Scal_s) = \frac{1}{s} \, \Id_{T \Scal_s}, \qquad H(\Scal_s) = \frac{1}{s}.
\]
These expressions will also give the limiting form of $\bS(\Scal_s)$ and $H(\Scal_s)$ as $s \to 0$ in an arbitrary Lorentzian manifold.
\end{example}

\section{Riccati techniques}
\label{sec:Riccati}

We will now develop some techniques that we will require to prove our comparison results.

\begin{definition}
Let $q \in M$. A \emph{null basis\/} at $q$ is a basis $(\l, \n, \mathbf{e}_1, \dots, \mathbf{e}_{n-1})$ for $T_q M$ with the property that
\bel{nullbasis}
\langle \l, \n \rangle = - 2, \qquad \langle \mathbf{e}_i, \mathbf{e}_j \rangle = \delta_{ij},
\ee
with other products vanishing. By a null basis on a connected set, we will mean a smoothly varying null basis at each point of the set.
\end{definition}

\begin{lemma}
\label{lemma:basis}
Given any point $q \in \Ncal^+_s(p) \setminus \{ p \}$, we may choose a null basis on a neighbourhood (in $\Ncal^+_s(p) \setminus \{ p \}$) of $q$ with the properties that
\bel{BasisDeriv}
\nabla_{\l} \l = 0,
\qquad
\nabla_{\l} \mathbf{e}_i = \alpha_i \l,
\qquad
\nabla_{\l} \n = 2 \alpha_i \mathbf{e}_i.
\ee
\end{lemma}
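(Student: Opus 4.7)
The plan is to build the null basis by prescribing initial data on a transversal slice through $q$ and then propagating along the null generators of $\Ncal^+_s(p)$. Pick $s_0 \in (0, s]$ and $\l_0 \in S^+_1(0)$ with $q = \gamma_{\l_0}(s_0)$, and let $\Sigma$ be a small neighbourhood of $q$ in $\Scal_{s_0}$. On $\Sigma$ the vector field $\l$ is already defined as the tangent to the null generators. Because $\Scal_{s_0}$ is a spacelike codimension-two submanifold, its normal bundle is Lorentzian of signature $(1,1)$, so there exists a unique future-directed null vector field $\n$ on $\Sigma$ satisfying $\n \perp T\Sigma$ and $\langle \l, \n\rangle = -2$. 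I then pick any smooth orthonormal frame $(\mathbf{e}_1, \dots, \mathbf{e}_{n-1})$ of $T\Sigma$, producing a null basis on $\Sigma$.

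To extend to a neighbourhood of $q$ in $\Ncal^+_s(p)$ I propagate each basis vector along the null generators by parallel transport. The Levi-Civita connection is metric, so the pairings \eqref{nullbasis} are preserved and the propagated frame remains a null basis. The relation $\nabla_\l \l = 0$ is immediate from $\l = \gammad_{\l}$ being the tangent to an affine geodesic, while parallel transport of $\mathbf{e}_i$ and $\n$ gives $\nabla_\l \mathbf{e}_i = 0$ and $\nabla_\l \n = 0$. Taking $\alpha_i \equiv 0$, all three relations of the lemma are satisfied.

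For a more flexible version, it is worth observing that the \emph{shape} of \eqref{BasisDeriv} is already forced by \eqref{nullbasis} and $\nabla_\l \l = 0$, up to a choice of spatial rotation. If one expands $\nabla_\l \mathbf{e}_i = \beta_i \n + \alpha_i \l + A_{ij}\mathbf{e}_j$ and $\nabla_\l \n = a \n + b \l + c_i \mathbf{e}_i$ for an arbitrary null basis propagated along $\l$, then differentiating the identities $\langle \l, \mathbf{e}_i\rangle = 0$, $\langle \mathbf{e}_i, \mathbf{e}_j\rangle = \delta_{ij}$, $\langle \l, \n\rangle = -2$, $\langle \n, \n\rangle = 0$, and $\langle \mathbf{e}_i, \n\rangle = 0$ along $\l$ immediately forces $\beta_i = 0$, $A_{ij} = -A_{ji}$, $a = b = 0$, and $c_i = 2\alpha_i$. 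An antisymmetric $A(s)$ can always be removed by an $s$-dependent orthogonal change of spatial frame $R(s) \in O(n-1)$ obtained as the solution of the linear ODE $R'(s) = -R(s) A(s)$ with $R(s_0) = \Id$; antisymmetry of $A$ guarantees that $R(s)$ remains orthogonal, so the rotated frame stays orthonormal and yields $\nabla_\l \mathbf{e}_i = \alpha_i \l$, after which $\nabla_\l \n = 2\alpha_i \mathbf{e}_i$ follows from the identities above.

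The only technical point I expect is ensuring smoothness on a genuine neighbourhood in $\Ncal^+_s(p)$, not merely along one generator; this is handled by smooth dependence of the parallel-transport ODE (and of the auxiliary rotation $R$ if used) on initial data, giving a smooth basis on a product neighbourhood of the form $(s_0 - \delta, s_0 + \delta) \times \Sigma$ parametrised by the null generators through $\Sigma$.
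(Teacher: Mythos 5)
There is a genuine gap in your first (and main) construction. The basis produced by the Lemma must do more than satisfy Definition~3.1 and the derivative relations~\eqref{BasisDeriv}: immediately after the Lemma the paper writes the orthogonal projection onto $T_{\gamma_\l(s)}\Scal_s$ as $\mathbf{v}\mapsto\langle\mathbf{v},\mathbf{e}_i\rangle\mathbf{e}_i$, uses the $\mathbf{e}_i$ to compute $\bS$, and remarks that the $\{\mathbf{e}_i\}$ span an integrable distribution. All of this requires that, at every point of the neighbourhood, the $\mathbf{e}_i$ span the tangent space to the sphere $\Scal_s$ through that point. The paper's proof guarantees this by choosing, at each point $\gamma_\l(s')$, an orthonormal frame of $T_{\gamma_\l(s')}\Scal_{s'}$; the resulting $\nabla_\l\mathbf{e}_i$ then has no $\n$-component (since $\langle\mathbf{e}_i,\l\rangle\equiv 0$) but does in general have a non-zero $\l$-component $\alpha_i$. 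Your parallel-transport construction loses exactly this adaptedness: since $\l^\perp$ is $n$-dimensional and contains $\l$, the $(n-1)$-dimensional screen $T\Scal_s\subset\l^\perp$ is not a parallel subbundle along the generators unless $\alpha_i\equiv 0$, so the parallel-transported $\mathbf{e}_i$ will in general drift out of $T\Scal_s$ for $s\ne s_0$. (In the four-dimensional Newman--Penrose language of the Appendix, your construction would force the spin coefficient appearing in $\nabla_\l\m\propto\l$ to vanish identically, which is not generic.) This is precisely the difference between the Riemannian situation, where parallel transport automatically preserves $\gamma'^\perp = T\,S(p,r)$, and the null situation, where it does not.

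Your second, ``more flexible'' paragraph is essentially the paper's argument, and is correct \emph{provided} you additionally stipulate that the $\mathbf{e}_i$ are chosen tangent to $\Scal_s$ at every point; with that stipulation, $\beta_i=0$ is automatic (rather than derived from $\langle\mathbf{e}_i,\n\rangle=0$ for a parallel $\n$), and the $\mathrm{SO}(n-1)$-valued rotation you describe to remove the antisymmetric $A_{ij}$ is exactly the step the paper carries out. So the fix is simply to discard the parallel-transport construction, start from a smooth choice of orthonormal frame of $T\Scal_s$ on the whole neighbourhood, and then apply the rotation argument; the derivative of $\n$ then follows as you indicate from preservation of the null-orthogonality conditions.
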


\begin{proof}
Given a normalised null vector $\l \in S^+_1(0)$, the affinely-parametrised geodesic $\gamma_{\l}$ is uniquely determined. By assumption, the geodesics $\gamma_{\l}$ are affinely-parametrised, and therefore satisfy
\[
\nabla_{\gammad_{\l}} \gammad_{\l} = 0.
\]
As $\l$ varies in $S^+_1(0)$, the tangent vectors $\gammad_{\l}$ determine a unique vector field on the set $\Ncal^+_s(p) \setminus \{ p \}$. As before, we will denote this vector field by $\l$. We then have the rank-$n$ vector bundle $\l^{\perp} \subset \left. TM \right|\Ncal^+_s(p) \setminus \{ p \}$. Given $q \equiv \gamma_{\l}(s_q) \in \Ncal^+_s(p) \setminus \{ p \}$, the fibre of this bundle is spanned by the vector $\gammad_{\l}(s_q)$ along with the tangent space, $T_{\gamma_{\l}(s_q)} \Scal_{s_q}$, to the sphere $\Scal_{s_q}$. We fix an orthonormal basis $\{ \mathbf{e}_1, \dots, \mathbf{e}_{n-1} \}$ of $T_{\gamma_{\l}(s_q)} \Scal_{s_q}$. The null orthogonality conditions~\eqref{nullbasis} now uniquely determine the null vector $\n(s_q) \in T_{\gamma_{\l}(s_q)}(s) M$ conjugate to $\gammad_{\l}(s_q)$.

We repeat this construction at each point of an open neighbourhood of $q$, giving a smooth basis $\{ \l, \n, \mathbf{e}_1, \dots, \mathbf{e}_{n-1} \}$. By construction, the distribution spanned by $\{ \l, \mathbf{e}_i \}$ is integrable. In addition, the distribution spanned by the $\{ \mathbf{e}_i \}$ is integrable, thereby ensuring that the operator $\bS$ is symmetric. Finally, the orthogonality relationships imply that
\[
\nabla_{\l} \mathbf{e}_i = \alpha_i \l + \beta_{ij} \mathbf{e}_j,
\]
for some functions $\alpha_i$ and $\beta_{ij}$, where $\beta_{ij} = - \beta_{ji}$. If we perform an orthogonal transformation to another basis $\widetilde{\mathbf{e}_i} = \Lambda_{ij} \mathbf{e}_j$, where $\Lambda \in \mathrm{SO}_{n-1}$, then we find that
\[
\widetilde{\beta} = \nabla_{\l} \Lambda + \Lambda \beta.
\]
Taking $\Lambda \colon [0, s] \to \mathrm{SO}_{n-1}$ to satisfy the ordinary differential equation
\[
\nabla_{\l} \Lambda(s) + \Lambda(s) \beta(s) = 0, \qquad \Lambda(s) \to \Id_{n-1} \mbox{ as $s \to 0$},
\]
we may ensure that $\widetilde{\beta} = 0$ along the geodesics $\gamma_{\l}$. Dropping the tilde's, the vector fields $\{ \l, \mathbf{e}_1, \dots, \mathbf{e}_{n-1} \}$ satisfy the required stated in~\eqref{BasisDeriv}. The form of the derivative of the complementary null vector $\n$ now follows from the preservation of the null-orthogonality conditions along the geodesics $\gamma_{\l}$.
\end{proof}

Given a point $\gamma_{\l}(s) \in \Scal_s$, we denote by $P \colon T_{\gamma_{\l}(s)} M \to T_{\gamma_{\l}(s)} \Scal_s$ the orthogonal projection onto the tangent space to the sphere, $\Scal_s$, at $\gamma_{\l}(s)$. In terms of the local basis introduced above, this map is written in the form $\mathbf{v} \mapsto \langle \mathbf{v}, \mathbf{e}_i \rangle \mathbf{e}_i$ for $\mathbf{v} \in T_{\gamma_{\l}(s)} M$.

\begin{definition}
Let $\l \in S^+_1(0)$, and $\gamma_{\l}$ the corresponding null geodesic. For $s > 0$, we define the map
\[
\mathcal{R}_{\l}(\gamma_{\l}(s)) \colon T_{\gamma_{\l}(s)} \Scal_s \to T_{\gamma_{\l}(s)} \Scal_s; \quad \mathbf{v} \mapsto P \left( \mathbf{R}(\mathbf{v}, \l) \l \right),
\]
and denote the corresponding operator along the geodesic $\gamma_{\l}$ by $\mathcal{R}_{\l}$.
\end{definition}

\begin{proposition}
The covariant derivative of the null shape operator, $\bS$, along the geodesic $\gamma_{\l}$ satisfies the identity
\bel{SRiccati}
\nabla_{\l} \bS = - \mathcal{R}_{\l} - \bS^2.
\ee
\end{proposition}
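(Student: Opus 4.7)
\medskip

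The plan is to work in the null basis $(\l, \n, \mathbf{e}_1, \ldots, \mathbf{e}_{n-1})$ provided by Lemma~\ref{lemma:basis}, translate the operator equation~\eqref{SRiccati} into the matrix equation $\nabla_{\l} \bS_{ij} = - (\mathcal{R}_{\l})_{ij} - (\bS^2)_{ij}$ for the components $\bS_{ij} = \langle \mathbf{e}_i, \bS(\mathbf{e}_j) \rangle = \langle \mathbf{e}_i, \nabla_{\mathbf{e}_j} \l \rangle$, and then derive this matrix identity from the second Bianchi-type identity for the curvature. The key observation that makes the operator-versus-component distinction harmless is that, by \eqref{BasisDeriv}, $P(\nabla_{\l} \mathbf{e}_i) = 0$, so the frame $\{\mathbf{e}_i\}$ is parallel modulo $\l$; consequently, the scalar derivatives $\nabla_{\l} \bS_{ij}$ are precisely the components of the operator $\nabla_{\l} \bS$ in this frame.

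The first step is to decompose $\nabla_{\mathbf{e}_j} \l$ in the null basis. Since $\langle \nabla_{\mathbf{e}_j} \l, \l \rangle = \tfrac{1}{2} \mathbf{e}_j \langle \l, \l \rangle = 0$, the $\n$-component vanishes, and one obtains $\nabla_{\mathbf{e}_j} \l = a_j \l + \bS_{ij} \mathbf{e}_i$ for some scalars $a_j$. The second step is to compute the Lie bracket $[\l, \mathbf{e}_j] = \nabla_{\l} \mathbf{e}_j - \nabla_{\mathbf{e}_j} \l = (\alpha_j - a_j) \l - \bS_{ij} \mathbf{e}_i$, using~\eqref{BasisDeriv}.

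The third step applies the definition of the Riemann tensor:
\[
\mathbf{R}(\l, \mathbf{e}_j) \l = \nabla_{\l} \nabla_{\mathbf{e}_j} \l - \nabla_{\mathbf{e}_j} \nabla_{\l} \l - \nabla_{[\l, \mathbf{e}_j]} \l.
\]
The middle term drops out because $\nabla_{\l} \l = 0$. One then evaluates the first term by differentiating the decomposition of $\nabla_{\mathbf{e}_j} \l$ along $\l$, using~\eqref{BasisDeriv} again to get contributions purely along $\l$ from the $\nabla_{\l} \mathbf{e}_i$ terms, leaving $(\nabla_{\l} \bS_{ij})\mathbf{e}_i$ as the only $T\Scal_s$-part. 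The third term $\nabla_{[\l, \mathbf{e}_j]} \l$ is computed using the decomposition of $[\l, \mathbf{e}_j]$ from the previous step and of $\nabla_{\mathbf{e}_k} \l$ from the first step, producing $-\bS_{kj}\bS_{ik}\mathbf{e}_i$ in the $T\Scal_s$-part (plus a purely $\l$-directed contribution). Projecting the resulting identity onto $\mathbf{e}_i$ yields
\[
\nabla_{\l} \bS_{ij} = \langle \mathbf{e}_i, \mathbf{R}(\l, \mathbf{e}_j) \l \rangle - (\bS^2)_{ij}.
\]

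The final step is to identify the curvature term. Using the antisymmetry of $\mathbf{R}$ in its first two slots, $\langle \mathbf{e}_i, \mathbf{R}(\l, \mathbf{e}_j) \l \rangle = - \langle \mathbf{e}_i, \mathbf{R}(\mathbf{e}_j, \l) \l \rangle = -(\mathcal{R}_{\l})_{ij}$, which gives the claimed identity at the level of components and hence as operators on $T\Scal_s$. The main conceptual point to get right is the bookkeeping of the $\l$-directed contributions in the third step, which cancel among themselves and do not enter the final $T\Scal_s$-valued equation; the only potential obstacle is making sure that the ``operator'' derivative $\nabla_{\l}\bS$ is well-defined, which is handled at the outset by the parallel-modulo-$\l$ property of the frame $\{\mathbf{e}_i\}$.
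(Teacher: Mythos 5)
Your proof is correct and follows essentially the same route as the paper's: express the shape operator's components $\bS_{ij}=\langle\mathbf{e}_i,\nabla_{\mathbf{e}_j}\l\rangle$ in the frame from Lemma~\ref{lemma:basis}, use $\nabla_{\l}\l=0$ together with the definition of the Riemann tensor applied to the pair $(\l,\mathbf{e}_j)$, and isolate the $T\Scal_s$-part to obtain the Riccati equation. The paper organizes the computation slightly differently, starting from $\nabla_{\l}\langle\mathbf{e}_i,\bS(\mathbf{e}_j)\rangle$ and substituting the curvature identity into the resulting second derivative, whereas you start from $\mathbf{R}(\l,\mathbf{e}_j)\l$ and decompose both $\nabla_{\l}\nabla_{\mathbf{e}_j}\l$ and $\nabla_{[\l,\mathbf{e}_j]}\l$ in the null basis; these are the same argument read in the two directions. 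One small terminological slip: what you invoke is the Ricci identity (the definition of curvature via non-commuting second covariant derivatives), not the ``second Bianchi-type identity'' --- the second Bianchi identity involves a covariant derivative of the Riemann tensor and plays no role here. Your explicit remark that $P(\nabla_{\l}\mathbf{e}_i)=0$ justifies treating $\nabla_{\l}\bS_{ij}$ as the components of the operator $\nabla_{\l}\bS$ is a useful clarification that the paper leaves implicit.
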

\begin{proof}
The result is local, so we may calculate using the basis $(\l, \n, \mathbf{e}_i)$ introduced in Lemma~\ref{lemma:basis}. We have
\begin{align*}
\langle \mathbf{e}_i, \left( \nabla_{\l}\bS \right) (\mathbf{e}_j) \rangle &=
\nabla_{\l} \langle \mathbf{e}_i, \bS(\mathbf{e}_j) \rangle
= \langle \mathbf{e}_i, \nabla_{\l} \nabla_{\mathbf{e}_j} \l \rangle
\\
&=
\langle \mathbf{e}_i, \mathbf{R}(\l, \mathbf{e}_j) \l + \nabla_{\left[ \l, \mathbf{e}_j \right]} \l \rangle
\\
&=
- \langle \mathbf{e}_i, \mathcal{R}_{\l}(\mathbf{e}_j) \rangle
+ \langle \mathbf{e}_i, \nabla_{\left[ \l, \mathbf{e}_j \right]} \l \rangle.
\end{align*}
In addition,
\begin{align*}
\left[ \l, \mathbf{e}_j \right]
&= \nabla_{\l} \mathbf{e}_j - \nabla_{\mathbf{e}_j} \l
= \alpha_j \l - \nabla_{\mathbf{e}_j} \l
= \left( \alpha_j + \frac{1}{2} \langle \nabla_{\mathbf{e}_j} \l, \n \rangle \right) \l
- \langle \nabla_{\mathbf{e}_j} \l, \mathbf{e}_k \rangle \mathbf{e}_k
\\
&=
\left( \alpha_j + \frac{1}{2} \langle \nabla_{\mathbf{e}_j} \l, \n \rangle \right) \l
- \langle \mathbf{e}_k, \bS(\mathbf{e}_j) \rangle \mathbf{e}_k.
\end{align*}
Therefore,
\[
\langle \mathbf{e}_i, \nabla_{\left[ \l, \mathbf{e}_j \right]} \l \rangle
= - \langle \mathbf{e}_i, \bS(\mathbf{e}_k) \rangle \langle \mathbf{e}_k, \bS(\mathbf{e}_j) \rangle
= - \langle \mathbf{e}_i, \bS^2(\mathbf{e}_j) \rangle,
\]
as required.
\end{proof}

Equation~\eqref{SRiccati}, along with the boundary condition that $s \cdot \bS(s) \to \Id$ as $s \to 0$, is a starting point for deriving area comparison and volume monotonicity results. For convenience, we define the following comparison functions (cf., e.g.,~\cite{Petersen}). Given $K \in \R$, we define
\bel{sn}
\sn_K(s) :=
\begin{cases}
\frac{1}{\sqrt{K}} \sin (\sqrt{K} s), &K> 0,
\\
s, &K = 0,
\\
\frac{1}{\sqrt{|K|}} \sinh (\sqrt{|K|} s), &K < 0.
\end{cases}
\ee
We then have the following comparison results:

\begin{proposition}
\label{prop:RiccComp}
Let $c$ be a real constant such that $\Ric(\gammad_{\l}, \gammad_{\l}) \ge c (n-1)$ along the geodesic $\gamma_{\l}$. Then
\bel{RicciComp}
\tr\bS(\gamma_{\l}(s)) \le (n-1) \frac{\sn^{\prime}_c(s)}{\sn_c(s)}, \qquad s > 0.
\ee
Alternatively, let $K$ be a real constant such that $\mathcal{R}_{\l}(\gamma_{\l}(s)) \le K \, \Id_{T_{\gamma_{\l}(s)} \Scal_s}$ along $\gamma_{\l}$.\footnote{By this, we mean that the eigenvalues of $\mathcal{R}_{\l}$ are bounded above by $K$, so $\langle \mathbf{v}, \mathcal{R}_{\l}(\mathbf{v}) \rangle \le K \langle \mathbf{v}, \mathbf{v} \rangle$ for all $\mathbf{v} \in T_{\gamma_{\l}(s)} \Scal_s$ along the geodesic $\gamma_{\l}$.} Then
\bel{SectComp}
\bS(\gamma_{\l}(s)) \ge \frac{\sn^{\prime}_K(s)}{\sn_K(s)} \, \Id_{T_{\gamma_{\l}(s)} \Scal_s}.
\ee
In particular,
\bel{TrSectComp}
\tr\bS(\gamma_{\l}(s)) \ge (n-1) \frac{\sn^{\prime}_K(s)}{\sn_K(s)}.
\ee
\end{proposition}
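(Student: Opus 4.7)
The plan is to exploit the matrix Riccati identity~\eqref{SRiccati} in two complementary ways. For~\eqref{RicciComp}, I would first take the trace of~\eqref{SRiccati}: in the null basis of Lemma~\ref{lemma:basis} the antisymmetries of the Riemann tensor annihilate the $\l$--$\n$ sector of the Ricci contraction, giving $\tr\mathcal{R}_\l = \Ric(\gammad_\l,\gammad_\l) \ge c(n-1)$, while Cauchy--Schwarz applied to the symmetric operator $\bS$ gives $\tr(\bS^2) \ge (\tr\bS)^2/(n-1)$. With $h := H$, the trace of~\eqref{SRiccati} reduces to the scalar Riccati inequality
\[
h' + h^2 + c \le 0,
\]
together with the boundary behaviour $h(s) = 1/s + O(s)$ as $s \to 0^+$ inherited from the limiting form of $\bS$ recalled in Section~\ref{sec:basic}. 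The function $h_c := \sn'_c/\sn_c$ satisfies the corresponding equality with the same asymptotics, and a Sturm-type argument---writing $h = \phi'/\phi$ recasts the inequality as $\phi'' + c\phi \le 0$, whence the Wronskian $W := \phi'\sn_c - \phi\sn'_c$ satisfies $W' = \sn_c(\phi'' + c\phi) \le 0$ with $W(0^+) = 0$---yields $h \le h_c$, which is~\eqref{RicciComp}.

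For~\eqref{SectComp} the matrix equation must be treated directly. I would introduce the symmetric operator $B(s) := \sn_K(s)\bS(s) - \sn'_K(s)\Id$, so that~\eqref{SectComp} is equivalent to $B \ge 0$. A direct computation from~\eqref{SRiccati} and $\sn''_K + K\sn_K = 0$ gives
\[
B' + \tfrac{1}{2}(\bS B + B\bS) = \sn_K(K\Id - \mathcal{R}_\l) \ge 0,
\]
while the asymptotics $\bS(s) \sim (1/s)\Id$ make $B(s) = O(s^2)$ as $s \to 0$. A matrix Riccati comparison---executed, for example, by introducing a matrix integrating factor $V$ with $V' = -\tfrac{1}{2}\bS V$ on an interval $[s_1, s]$, observing that the symmetrised equation forces $V^T B V$ to be non-decreasing, and then letting $s_1 \to 0$---then gives $B \ge 0$. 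Inequality~\eqref{TrSectComp} follows by tracing~\eqref{SectComp}.

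The main obstacle is the matrix Riccati comparison in~\eqref{SectComp}: since $\bS$ and $\mathcal{R}_\l$ need not be simultaneously diagonalisable, no reduction to a scalar Riccati equation is available, and the singularity of the Riccati data at $s = 0$, where both $\bS$ and $(\sn'_K/\sn_K)\Id$ blow up like $(1/s)\Id$, makes $B$ an indeterminate form at the origin whose leading $O(s^2)$ behaviour must be balanced carefully against the blow-up of the integrating factor. The symmetrised integrating-factor argument sketched above is specifically designed to reconcile the non-commutativity with this boundary regularisation.
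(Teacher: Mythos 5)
Your scalar argument for~\eqref{RicciComp} coincides with the paper's: trace the Riccati identity, use the antisymmetries of the Riemann tensor to reduce $\tr\mathcal{R}_{\l}$ to $\Ric(\gammad_{\l},\gammad_{\l})$, apply Cauchy--Schwarz, substitute $h=\phi'/\phi$, and run a Wronskian/Sturm comparison against $\sn_c$.

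For~\eqref{SectComp} your route is genuinely different from the paper's. The paper tracks the lowest eigenvalue $\lambda_1(s)$ of $\bS(s)$: it fixes $t>0$, extends the eigenvector $\mathbf v_1(t)$ to a parallel-type field $\mathbf V(s)=\sum a_i\mathbf e_i(s)$, shows via min--max that $\Lambda_1(s)=\langle\mathbf V,\bS\mathbf V\rangle$ touches $\lambda_1$ from above at $s=t$, computes $\lambda_1'(t)$ from the Riccati equation, and then reduces to the same scalar comparison. This requires a (stated-but-unproved) regularity assumption on the eigenvalues. Your symmetrised-integrating-factor argument avoids eigenvalue regularity entirely and is, in that respect, cleaner: the algebra $B'+\tfrac12(\bS B+B\bS)=\sn_K(K\Id-\mathcal{R}_{\l})$ checks out (using $\sn_K''+K\sn_K=0$ and~\eqref{SRiccati}), and the asymptotics $B=O(s^2)$, $V\sim (s/s_1)^{1/2}\Id$ do balance in the limit $s_1\to 0^+$ as you claim, so the conclusion $B\ge 0$ follows. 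The trade-off is that you must handle the singular initial condition for the integrating factor, whereas the paper's eigenvalue argument reduces immediately to the already-established scalar lemma.

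One concrete error: the integrating factor needs the opposite sign. With $V'=-\tfrac12\bS V$ one gets $(V^{T}BV)'=V^{T}\bigl(B'-\tfrac12(\bS B+B\bS)\bigr)V$, which has no sign. You want $V'=+\tfrac12\bS V$, giving $(V^{T}BV)'=V^{T}\bigl(B'+\tfrac12(\bS B+B\bS)\bigr)V=V^{T}\sn_K(K\Id-\mathcal{R}_{\l})V\ge 0$. With this correction, your proof is sound and presents a legitimate alternative to the paper's eigenvalue method.
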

\begin{proof}
For completeness, we give proofs of both results even though they are adaptions of quite standard techniques.

\vs
For simplicity, we denote quantities such as $\bS(\gamma_{\l}(s))$ by $\bS(s)$ for the duration of the proof. Let $H(s) := \frac{1}{n-1} \tr\bS(s)$. It follows from the asymptotics of $\bS(s)$ that $s \cdot H(s) \to 1$ as $s \to 0$. We now note that, applying the Cauchy-Schwarz inequality for $(n-1) \times (n-1)$ symmetric matrices, we have that
\bel{CS}
H^2 = \frac{1}{(n-1)^2} \left( \tr \bS \right)^2 \le \frac{1}{n-1} \tr \bS^2.
\ee
Taking the trace of~\eqref{SRiccati}, and substituting the inequality~\eqref{CS}, we deduce that $H$ satisfies the differential inequality
\[
H^{\prime}(s) + H(s)^2
\le - \frac{1}{n-1} \tr \mathcal{R}_{\l}(s).
\]
Letting $\{ \mathbf{e}_i \}_{i=1}^{n-1}$ denote any orthonormal basis for $T_q M$ (at any point $q$ of interest to us), then
\begin{align*}
\tr \mathcal{R}_{\l} &=
\sum_{i=1}^{n-1} \langle \mathbf{e}_i, \mathcal{R}_{\l}(\mathbf{e}_i) \rangle =
\sum_{i=1}^{n-1} \langle \mathbf{e}_i,
\mathbf{R}(\mathbf{e}_i, \gammad_{\l}(s)) \gammad_{\l}(s) \rangle
\\
&=
\Ric(\gammad_{\l}(s) , \gammad_{\l}(s)) +
\frac{1}{2} \mathbf{R}(\gammad_{\l}(s), \gammad_{\l}(s), \gammad_{\l}(s), \n(s))
\\
&\hskip 4.5cm+
\frac{1}{2} \mathbf{R}(\gammad_{\l}(s), \n(s), \gammad_{\l}(s), \gammad_{\l}(s))
\\
&=
\Ric(\gammad_{\l}(s) , \gammad_{\l}(s))
\ge c (n-1).
\end{align*}
Therefore, $H$ satisfies the inequality
\bel{Hinequality}
H^{\prime}(s) + H(s)^2 \le - c.
\ee
Let $H(s) = \frac{a'(s)}{a(s)}$, with $a(0) = 0, a'(0) = 1$. We then have
\[
a^{\prime\prime}(s) + c \, a(s) \le 0.
\]
We now note that the comparison function $\sn_c(s)$ satisfies the differential equation
\[
{\sn}^{\prime\prime}_c(s) + c \, \sn_c(s) = 0,
\]
with the same boundary conditions as $a$ at $s = 0$. It follows that
\[
\frac{d}{ds} \left( a'(s) \sn_c(s) - \sn^{\prime}(s) a(s) \right) \le 0,
\]
so the quantity $a'(s) \sn_c(s) - \sn^{\prime}(s) a(s)$ is non-increasing as a function of $s$. Since this quantity is zero at $s = 0$, we deduce that $a'(s) \sn_c(s) \le \sn^{\prime}(s) a(s)$ for $s > 0$. Therefore
\[
\tr\bS(s) = (n-1) H(s) = (n-1) \frac{a'(s)}{a(s)} \le (n-1) \frac{\sn^{\prime}_c(s)}{\sn_c(s)},
\]
as required.

\

For the second result, we must use the full matrix Riccati equation~\eqref{SRiccati}. We follow the technique of~\cite[Chapter~6]{Petersen}.

The operator $\bS(s)$ is symmetric on $T_{\gamma_{\l}(s)} \Scal_s$ with respect to the inner product $\left. \bs_s \right|_{\gamma_{\l}(s)}$. It therefore has real eigenvalues, which we label as $\lambda_1(s) \le \dots \le \lambda_{n-1}(s)$. Since $\bS(s)$ is smooth in $s$, a $\min$-$\max$ argument implies that these eigenvalues are Lipschitz functions of $s$, and are smooth when the eigenvalues are distinct. We assume, for simplicity, that the eigenvalues are smooth.\footnote{The case where the eigenvalues are Lipschitz may be treated by barrier methods.} Finally, note that, since $s \cdot \bS(s) \to \mathrm{Id}$ as $s \to 0$, the eigenvalues satisfy the asymptotic condition that $s \cdot \lambda_i(s) \to 1$ as $s \to 0$ for $i = 1, \dots, n-1$.

Let $t > 0$ be fixed, with $\lambda_1(t)$ the lowest eigenvalue of $\bS(t)$ with corresponding unit-length eigenvector $\mathbf{v}_1(t) \in T_{\gamma_{\l}(t)} \Scal_t$. Then there exist coefficients $a_1, \dots, a_{n-1}$ such that
\[
\mathbf{v}_1(t) = \sum_{i=1}^{n-1} a_i \, \mathbf{e}_i(t).
\]
For $s > 0$, let $\mathbf{V}$ be the vector field along $\gamma_{\l}$ defined by
\[
\mathbf{V}(s)  = \sum_{i=1}^{n-1} a_i \, \mathbf{e}_i(s).
\]
We define the function
\[
\Lambda_1(s) := \langle \mathbf{V}(s), \bS(s) (\mathbf{V}(s)) \rangle, \qquad s > 0.
\]
A $\min$-$\max$ argument then implies that
\bel{minmax}
\Lambda_1(s) \ge \lambda_1(s)
\ee
for all $s > 0$, with equality when $s = t$. Since $\lambda_1(s)$ and $\Lambda_1(s)$ are smooth at $s = t$, and~\eqref{minmax} holds for all $s$ on a neighbourhood of $t$, it follows that ${\Lambda}^{\prime}_1(t) = {\lambda}^{\prime}_1(t)$. We therefore have
\begin{align*}
{\lambda}^{\prime}_1(t) &= \left. \frac{d}{ds} \Lambda_1(s) \right|_{s=t}
\\
&=
\left. \nabla_{\gamma^{\prime}_{\l}(s)} \left< \mathbf{V}(s), \bS(s) (\mathbf{V}(s)) \right> \right|_{s=t}
\\
&=
\left. \left< \mathbf{V}(s), \bS^{\prime}(s) (\mathbf{V}(s)) \right> \right|_{s=t}
+ \left< \left. \mathbf{V}^{\prime}(s) \right|_{s=t}, \bS(t) (\mathbf{v}_1(t)) \right>
\\
&\hskip 5.5cm + \left< \mathbf{v}_1(t), \bS(t) (\left. \mathbf{V}^{\prime}(s) \right|_{s=t}) \right>
\\
&=
\left< \mathbf{v}_1(t), \left[ - \bS(t)^2 - \mathcal{R}_{\l} \right] \mathbf{v}_1(t) \right>
+ 2 \lambda_1(t) \left< \mathbf{V}^{\prime}(t), \mathbf{v}_1(t) \right>
\\
&=
- \lambda_1(t)^2 - \left< \mathbf{v}_1(t), \mathcal{R}_{\l} (\mathbf{v}_1(t)) \right>
+ 2 \lambda_1(t) \left< \mathbf{V}^{\prime}(t), \mathbf{v}_1(t) \right>,
\end{align*}
where the fourth equality follows from~\eqref{SRiccati} and the symmetry of the operator $\bS(t)$ with respect to the inner product. We now note that
\[
\left< \mathbf{V}^{\prime}(t), \mathbf{v}_1(t) \right> =
\sum_{i, j = 1}^{n-1} a_i a_j \alpha_i \left< \l, \mathbf{e}_j \right> = 0.
\]
Therefore, imposing the curvature bound $\mathcal{R}_{\l} \le K \, \Id$, we have
\[
{\lambda}^{\prime}_1(t)
= - \lambda_1(t)^2 - \left< \mathbf{v}_1(t), \mathcal{R}_{\l} (\mathbf{v}_1(t)) \right>
\ge - \lambda_1(t)^2 - K.
\]
Changing the variable back from $t$ to $s$, we therefore have that, for all $s > 0$, the inequality
\[
{\lambda}^{\prime}_1(s) \ge - \lambda_1(s)^2 - K
\]
holds. Letting $\lambda_1(s) = \frac{a'(s)}{a(s)}$ with $a(0) = 0$, $a'(0) = 1$, we deduce that
\[
a^{\prime\prime}(s) + K a(s) \ge 0.
\]
Proceeding as in the proof of the first result, we conclude that
\[
\lambda_1(s) \ge \frac{\sn^{\prime}_K(s)}{\sn_K(s)}.
\]
Since $\lambda_1(s)$ is the lowest eigenvalue of $\bS(s)$, this inequality implies the required result~\eqref{SectComp}. Taking the trace of~\eqref{SectComp} yields~\eqref{TrSectComp}.
\end{proof}

\begin{remark}
The first result in Proposition~\ref{prop:RiccComp} is essentially a sharpened version of a standard conjugate point calculation that appears, for example, in the proof of the singularity theorems (see, e.g., \cite[Chapter~4]{HE}). If $c > 0$ (i.e. the Ricci tensor is positive along the null geodesics) then the factor $\sn_c^{\prime}(s)/\sn_c(s)$ diverges to $-\infty$ as $s \to \pi/\sqrt{c}$, which signifies that the geodesic $\gamma_{\l}$ has encountered a conjugate point.

The second result in the Proposition~\ref{prop:RiccComp} implies that if the curvature is bounded above, then either the shape operator is positive definite if $K \le 0$ and positive semi-definite up to affine distance $\pi/\sqrt{K}$ if $K > 0$. A consequence of this is that the geodesics $\gamma_{\l}$ will encounter no conjugate points (if $K \le 0$) or will not encounter them before affine distance $\pi/\sqrt{K}$ (if $K > 0$). The latter result is analogous to a simplified version of the Rauch comparison theorem in Riemannian geometry (see, e.g.,~\cite{CE}). Indeed, our curvature condition that $\mathcal{R}_{\l} \le K$ is equivalent Harris's condition~\cite{Harris} that the null curvature along a null geodesic be bounded above. In four-dimensions, in Newman--Penrose conventions, this curvature bound is equivalent to imposing the condition that $\Phi_{00} + |\Psi_0| \le K$. See the Appendix for more details.
\end{remark}

\section{Comparison results}
\label{sec:comparison}

In this section, we derive our area and volume monotonicity and comparison results.

\begin{theorem}
\label{AreaComparison}
Let $(M, \g)$ be a Lorentzian manifold. Let $p \in M$, and assume that $\Ric(\gammad_{\l}, \gammad_{\l}) \ge c(n-1)$ for along each null generator $\gamma_{\l}$ of $N^+(p)$. Then the area of the cross section of the null cone $\Scal_s$ is such that the map
\bel{mono1}
s \mapsto \frac{|\Scal_s|_{\g}}{\omega_{n-1} \sn_c(s)^{n-1}} \mbox{ is non-increasing}
\ee
and the ratio on the right-hand-side converges to $1$ as $s \to 0$. In particular,
\[
|\Scal_s|_{\g} \le \omega_{n-1} \sn_c(s)^{n-1}.
\]
for $s \ge 0$.
\end{theorem}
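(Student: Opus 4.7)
The plan is to promote the pointwise Riccati bound in Proposition~\ref{prop:RiccComp} to an integrated area bound by comparing the cross-sectional area element with its model counterpart. First I would pull everything back to the ``celestial sphere'' $S^+_1(0) \subset T_p M$: since $s$ is less than the null injectivity radius, the map $\Phi_s : S^+_1(0) \to \Scal_s, \ \l \mapsto \gamma_{\l}(s)$ is a diffeomorphism, and there is a smooth positive Jacobian density $J(s, \l)$ such that
\[
|\Scal_s|_{\g} = \int_{S^+_1(0)} J(s, \l) \, dV_{S^+_1(0)}(\l).
\]

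Next I would record the pointwise evolution equation $\partial_s \log J(s, \l) = \tr\bS(\gamma_{\l}(s))$ along each generator, which is simply the infinitesimal form of~\eqref{ddsSs} applied to a small conical neighbourhood of $\l$. Combined with the Ricci bound~\eqref{RicciComp}, this yields
\[
\partial_s \log \frac{J(s,\l)}{\sn_c(s)^{n-1}} = \tr\bS(\gamma_{\l}(s)) - (n-1)\frac{\sn^{\prime}_c(s)}{\sn_c(s)} \le 0,
\]
so that $J(s,\l)/\sn_c(s)^{n-1}$ is non-increasing in $s$ for each fixed $\l$. To pin down the limit at $s = 0$, I would invoke the asymptotic $s \cdot \bS(s) \to \Id$ recorded in the Example, which forces $J(s, \l) \sim s^{n-1}$, while $\sn_c(s) \sim s$; consequently $J(s,\l)/\sn_c(s)^{n-1} \to 1$ pointwise.

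Integrating over $S^+_1(0)$, which has total measure $\omega_{n-1}$, the pointwise monotonicity and normalisation transfer directly to the claimed ratio $|\Scal_s|_{\g}/(\omega_{n-1} \sn_c(s)^{n-1})$, and the area bound $|\Scal_s|_{\g} \le \omega_{n-1} \sn_c(s)^{n-1}$ then follows by comparison with the $s \to 0$ limit. I expect the main obstacle to be the clean justification of the Jacobian evolution $\partial_s \log J = \tr\bS$ and of its $s \to 0$ normalisation, since these require a careful look at the exponential-type map $\Phi_s$ near the vertex $p$ where the density degenerates; the deeper tensorial input has already been supplied by Proposition~\ref{prop:RiccComp}, so everything else amounts to a change of variables.
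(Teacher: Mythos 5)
Your proposal is correct and relies on precisely the same two ingredients as the paper's proof: the area derivative formula~\eqref{ddsSs} and the Riccati bound~\eqref{RicciComp} on $\tr\bS$. The difference is one of packaging. The paper works directly at the integrated level: it writes
\[
\frac{d}{ds}\log|\Scal_s|_{\g} \;=\; \frac{1}{|\Scal_s|_{\g}}\int_{\Scal_s}\tr\bS\,dV_{\bs_s} \;\le\; (n-1)\,\frac{\sn_c^{\prime}(s)}{\sn_c(s)},
\]
hence $\tfrac{d}{ds}\log\bigl(|\Scal_s|_{\g}/\sn_c(s)^{n-1}\bigr)\le 0$, and then fixes the constant by noting that both $|\Scal_s|_{\g}$ and $\omega_{n-1}\sn_c(s)^{n-1}$ tend to $\omega_{n-1}s^{n-1}$ as $s\to 0$. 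You instead disaggregate this argument by pulling back along the exponential-type map to $S^+_1(0)$, introducing the Jacobian density $J(s,\l)$, and proving the \emph{pointwise} monotonicity of $J(s,\l)/\sn_c(s)^{n-1}$ along each generator before integrating. This buys you a strictly finer statement --- monotonicity of the area element per null direction, not merely of the total area ratio --- at the cost of having to justify the Jacobian evolution $\partial_s\log J=\tr\bS$ and the normalisation $J\sim s^{n-1}$ near the degenerate vertex, which is exactly the delicacy you flag; the paper sidesteps both by quoting~\eqref{ddsSs} for the integrated area and normalising against $\omega_{n-1}s^{n-1}$ wholesale. Either route is sound, and indeed your pointwise version is what one would feed directly into the Bishop--Gromov argument in the Riemannian setting; the paper's phrasing is simply the minimal amount of work needed for the stated integral conclusion.
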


If $c = 0$, then $\omega_{n-1} \sn_c(s)^{n-1} \equiv \omega_{n-1} s^{n-1}$ equals the area of the $(n-1)$-sphere of radius $s$. In particular, this is equal to the cross-sectional area of the slice, $S^0_s$, of the null cone in flat Minkowski space. We denote the area of such a slice in Minkowski space by $|S^0_s|_{\eta}$. The final statement in this Theorem therefore allows us to sharpen one of the main results of~\cite{LightCone}:

\begin{theorem}
Let $(M, \g)$ be a Lorentzian metric, the Ricci tensor of which obeys the condition that $\Ric(\gamma', \gamma') \ge 0$ along all future-directed null geodesics from the point $p$. Let $|S^0_s|_{\eta}$ denote the cross-sectional area of the slice $S^0_s$ of the null cone in flat Minkowski space $\R^{n, 1}$. Then the ratio
\bel{LCInequality}
\frac{|\Scal_s|_{\g}}{|S^0_s|_{\eta}}
\ee
is non-increasing as a function of $s$ and converges to $1$ as $s \to 0$. In particular,
\[
|\Scal_s|_{\g} \le |S^0_s|_{\eta}.
\]
\end{theorem}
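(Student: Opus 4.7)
The plan is to obtain this statement as an essentially immediate specialisation of Theorem~\ref{AreaComparison} to the case $c = 0$. First I would observe that the hypothesis $\Ric(\gamma',\gamma') \ge 0$ along all future-directed null geodesics from $p$ is precisely the bound $\Ric(\gammad_\l,\gammad_\l) \ge c(n-1)$ of Theorem~\ref{AreaComparison} with $c = 0$, so that theorem applies verbatim.

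Next I would unpack the comparison function in this limit. From the definition~\eqref{sn}, $\sn_0(s) = s$, and hence the comparison quantity appearing in~\eqref{mono1} reduces to $\omega_{n-1}\sn_0(s)^{n-1} = \omega_{n-1} s^{n-1}$. The Example in Section~\ref{sec:basic} records that this is exactly $|S^0_s|_\eta$, the cross-sectional area of the corresponding slice of the null cone at the origin in Minkowski space $\R^{n,1}$. With this identification the monotonicity statement~\eqref{mono1} of Theorem~\ref{AreaComparison} reads exactly as the assertion that $s \mapsto |\Scal_s|_\g / |S^0_s|_\eta$ is non-increasing, and the convergence of this ratio to $1$ as $s \to 0$ is inherited directly from Theorem~\ref{AreaComparison}.

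The pointwise inequality $|\Scal_s|_\g \le |S^0_s|_\eta$ then follows at once: a non-increasing function whose limit at $s = 0$ equals $1$ cannot exceed $1$ on $(0, s)$. I do not anticipate any substantive obstacle, since the analytic work (the Riccati comparison of Proposition~\ref{prop:RiccComp} and its integration to an area estimate in Theorem~\ref{AreaComparison}) has already been carried out; what remains is only the cosmetic step of recognising the $c = 0$ comparison function $\omega_{n-1}\sn_0(s)^{n-1}$ as the Minkowski cross-sectional area.
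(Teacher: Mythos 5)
Your proposal matches the paper's argument exactly: the paper presents this theorem as an immediate corollary of Theorem~\ref{AreaComparison}, obtained by setting $c=0$ and noting that $\omega_{n-1}\sn_0(s)^{n-1}=\omega_{n-1}s^{n-1}=|S^0_s|_\eta$, with the pointwise inequality following from monotonicity and the limit at $s\to 0$. Nothing to add or correct.
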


\begin{proof}[Proof of Theorem~\ref{AreaComparison}]
Equations~\eqref{ddsSs} and~\eqref{RicciComp} imply that
\[
\frac{d}{ds} \log \left( |\Scal_s|_{\g} \right) =
\frac{1}{|\Scal_s|_{\g}} \int_{\Scal_s} \tr\bS \, dV_{\bs_s} \le
(n-1) \frac{\sn_c^{\prime}(s)}{\sn_c(s)}.
\]
Hence
\[
\frac{d}{ds} \log \left( \frac{|\Scal_s|_{\g}}{\sn_c(s)^{n-1}} \right) \le 0,
\]
which yields the monotonicity formula~\eqref{mono1}. To fix the relative normalisation, we note that $|\Scal_s|_{\g}$ and $\omega_{n-1} \sn_c(s)^{n-1}$ both converge to the area of an $(n-1)$-sphere of radius $s$ as $s \to 0$. Therefore, their ratio converges to $1$.
\end{proof}

{}From this result, we may derive an analogue of the Bishop--Gromov comparison result. As is standard, the Lorentzian metric does not induce a semi-Riemannian metric on the null cone $N^+(p)$. We may, however, still define the null volume of the set $\Ncal^+_s(p)$ to be the integral
\[
|\Ncal^+_s(p)|_{\g} := \int_0^s |\Scal_t|_{\g} dt.
\]
For the model quantity, we define
\[
V^+_c(s) := \omega_{n-1} \int_0^s \sn_c(t)^{n-1} \, dt.
\]

Finally, we require the following simple, but surprisingly powerful, observation from~\cite[pp.~42]{CGT}:
\begin{lemma}
\label{CGTLemma}
Let $f, g \colon [0, \infty) \rightarrow (0, \infty)$ with the property that $f/g$ is non-increasing. Then
\[
\frac{\int_0^r f(s)ds}{\int_0^r g(s) ds} \mbox{ is a non-increasing function of $r$}.
\]
\end{lemma}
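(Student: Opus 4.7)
The statement is a short, purely measure-theoretic consequence of the assumption on $f/g$, so the plan is to reduce it to a pointwise inequality and then integrate. Writing $F(r) := \int_0^r f(s)\,ds$ and $G(r) := \int_0^r g(s)\,ds$, which are both strictly positive for $r>0$ by the positivity of $f,g$, the monotonicity of $F/G$ is equivalent, for $0 < r_1 \le r_2$, to the single inequality
\begin{align*}
F(r_1)\,G(r_2) \ge F(r_2)\,G(r_1).
\end{align*}

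I would establish this by expanding $G(r_2) = G(r_1) + \int_{r_1}^{r_2} g(s)\,ds$ and $F(r_2) = F(r_1) + \int_{r_1}^{r_2} f(s)\,ds$, which after cancellation gives
\begin{align*}
F(r_1)\,G(r_2) - F(r_2)\,G(r_1) = \int_0^{r_1}\!\int_{r_1}^{r_2} \bigl[ f(t) g(s) - g(t) f(s) \bigr]\, ds\, dt.
\end{align*}
For every pair $(t,s)$ in the domain of integration, one has $t \le r_1 \le s$, so the hypothesis that $f/g$ is non-increasing, together with the positivity of $g$, gives $f(t)/g(t) \ge f(s)/g(s)$, i.e.\ $f(t) g(s) - g(t) f(s) \ge 0$. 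The integrand is therefore non-negative pointwise, and the required inequality follows immediately.

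Once the difference is rewritten as a double integral, the argument is essentially a one-line observation, so I do not anticipate any genuine obstacle. The only subtlety is that no regularity of $f$ or $g$ beyond integrability is assumed, which is why I avoid working with $(F/G)'$ directly. Should one be willing to assume continuity of $f$ and $g$, an equivalent and perhaps more transparent presentation would be to verify $(F/G)' \le 0$ by showing $f(r) G(r) \le F(r) g(r)$, an inequality obtained by integrating $f(r) g(s) \le f(s) g(r)$ over $s \in [0,r]$; the double-integral formulation avoids this hypothesis altogether.
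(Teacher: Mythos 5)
The paper does not prove this lemma: it imports it verbatim from \cite[p.~42]{CGT} and uses it as a black box, so there is no in-text proof to compare against. Your argument is correct and is, in substance, the standard proof of the Cheeger--Gromov--Taylor observation. Setting $F(r)=\int_0^r f$, $G(r)=\int_0^r g$, you reduce the claim to $F(r_1)G(r_2)\ge F(r_2)G(r_1)$ for $r_1\le r_2$ and, after cancelling the common $F(r_1)G(r_1)$ terms, rewrite the difference as
\[
F(r_1)G(r_2)-F(r_2)G(r_1)=\int_0^{r_1}\!\!\int_{r_1}^{r_2}\bigl[f(t)g(s)-g(t)f(s)\bigr]\,ds\,dt,
\]
whose integrand is pointwise nonnegative because $t\le r_1\le s$ and $f/g$ is non-increasing with $g>0$. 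That is exactly the right mechanism, and you are right that phrasing it via the double integral is preferable to differentiating $F/G$, since the lemma as stated assumes nothing about $f,g$ beyond (local) integrability and positivity. One small point worth making explicit: positivity of $f$ and $g$ guarantees $G(r)>0$ for $r>0$, so the ratio $F/G$ is well defined on $(0,\infty)$, which is the interval on which monotonicity is asserted; this is implicit in your write-up and costs nothing to state. Your alternative derivative-based remark (showing $f(r)G(r)\le F(r)g(r)$ by integrating $f(r)g(s)\le f(s)g(r)$ over $s\in[0,r]$) is also correct under the extra continuity hypothesis and is the version one most often sees quoted, but the double-integral form is the cleaner one to record here.
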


\

We then have the following result:

\begin{theorem}
\label{thm:volumemonotonicity}
Let $(M, \g)$ be a Lorentzian manifold. Let $p \in M$, and assume that $\Ric(\gammad_{\l}, \gammad_{\l}) \ge c(n-1)$ for along each null generator $\gamma_{\l}$ of $N^+(p)$. Then the null volume of the set $\Ncal^+_s(p)$ is such that the map
\bel{mono2}
s \mapsto \frac{|\Ncal^+_s(p)|_{\g}}{V^+_c(s)} \mbox{ is non-increasing}
\ee
and the ratio on the right-hand-side converges to $1$ as $s \to 0$. In particular,
\[
|\Ncal^+_s(p)|_{\g} \le V^+_c(s)
\]
for $s \ge 0$.
\end{theorem}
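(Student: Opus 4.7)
The plan is to apply Lemma~\ref{CGTLemma} directly to the area monotonicity already established in Theorem~\ref{AreaComparison}, using the fact that both $|\Ncal^+_s(p)|_{\g}$ and $V^+_c(s)$ are defined as integrals of the corresponding area quantities. Concretely, I would set
\[
f(s) := |\Scal_s|_{\g}, \qquad g(s) := \omega_{n-1} \sn_c(s)^{n-1},
\]
so that by definition $|\Ncal^+_s(p)|_{\g} = \int_0^s f(t)\, dt$ and $V^+_c(s) = \int_0^s g(t)\, dt$. Both $f$ and $g$ are positive on $(0, s)$ (at least for $s$ less than the null injectivity radius, and, if $c > 0$, smaller than $\pi / \sqrt{c}$ so that $\sn_c$ stays positive).

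The first and central step is to observe that Theorem~\ref{AreaComparison} is exactly the statement that $f/g$ is non-increasing. Invoking Lemma~\ref{CGTLemma} then yields immediately that
\[
\frac{|\Ncal^+_s(p)|_{\g}}{V^+_c(s)} = \frac{\int_0^s f(t)\, dt}{\int_0^s g(t)\, dt}
\]
is a non-increasing function of $s$, which is the monotonicity claim~\eqref{mono2}.

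Next, I would check the normalisation as $s \to 0$. Since Theorem~\ref{AreaComparison} asserts that $f(s)/g(s) \to 1$ as $s \to 0$, an elementary argument (for instance L'H\^opital's rule applied to the ratio of the two integrals, both of which vanish at $s = 0$) shows that the ratio $|\Ncal^+_s(p)|_{\g} / V^+_c(s)$ also converges to $1$ as $s \to 0$. Combined with the monotonicity, this forces $|\Ncal^+_s(p)|_{\g} / V^+_c(s) \le 1$ for all $s > 0$, giving the absolute bound $|\Ncal^+_s(p)|_{\g} \le V^+_c(s)$.

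There is no serious obstacle here: the theorem is essentially a formal consequence of Theorem~\ref{AreaComparison} and the integration lemma from~\cite{CGT}. The only minor point requiring care is to verify positivity of $g$ on the relevant interval (so that Lemma~\ref{CGTLemma} applies) and to justify the limit $s \to 0$ cleanly. Both are handled by the hypotheses already in force: the restriction to $s$ less than the null injectivity radius ensures $f > 0$, while for $c > 0$ the bound $s < \pi/\sqrt{c}$ is implicit (indeed, the first remark after Proposition~\ref{prop:RiccComp} shows a conjugate point obstruction at $\pi/\sqrt{c}$, so the inequality saturates no later than this).
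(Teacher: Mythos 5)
Your argument is correct and is exactly the one the paper uses: take $f(s) = |\Scal_s|_{\g}$ and $g(s) = \omega_{n-1}\sn_c(s)^{n-1}$, invoke Theorem~\ref{AreaComparison} for the monotonicity of $f/g$, and apply Lemma~\ref{CGTLemma} to the integrals, with the normalisation as $s \to 0$ fixing the ratio at $1$. The additional remarks about positivity of $g$ and the L'H\^opital justification of the limit are sensible sanity checks but do not alter the method.
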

\begin{proof}
Taking $f(s) = |\Scal_s|_{\g}$ and $g(s) = \omega_{n-1} \sn_c(s)^{n-1}$, then Theorem~\ref{AreaComparison} implies that the ratio $f/g$ is non-increasing. Applying Lemma~\ref{CGTLemma} then gives the monotonicity result~\eqref{mono2}. Again, the limiting value of the ratio as $s \to 0$ is clearly $1$.
\end{proof}

\

On the other hand, if we assume an upper bound on the null curvature, then we derive a dual version of the area monotonicity formula:

\begin{theorem}
Let $(M, \g)$ be a Lorentzian manifold. Let $p \in M$, and assume that $\mathcal{R}_{\l} \le K$ for along each null generator $\gamma_{\l}$ of $N^+(p)$. Then the area of the cross section of the null cone $\Scal_s$ is such that the map
\[
s \mapsto \frac{|\Scal_s|_{\g}}{\omega_{n-1} \sn_K(t)^{n-1}} \mbox{ is non-decreasing}
\]
and the ratio on the right-hand-side converges to $1$ as $s \to 0$. In particular,
\[
|\Scal_s|_{\g} \ge \omega_{n-1} \sn_K(t)^{n-1}
\]
for $s \ge 0$.
\end{theorem}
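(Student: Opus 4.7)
The plan is to mirror the proof of Theorem~\ref{AreaComparison}, but now using the lower bound on $\tr\bS$ from equation~\eqref{TrSectComp} of Proposition~\ref{prop:RiccComp} in place of the upper bound~\eqref{RicciComp}. All the analytic machinery is already in place; only the direction of the inequality changes.

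First I would combine the area-derivative formula~\eqref{ddsSs} with the pointwise bound~\eqref{TrSectComp}. Since $\tr\bS(\gamma_{\l}(s)) \ge (n-1) \sn^{\prime}_K(s)/\sn_K(s)$ holds along every null generator, and the right-hand side depends only on $s$, it may be pulled outside the integral over $\Scal_s$. This yields
\[
\frac{d}{ds}\log|\Scal_s|_{\g} \;=\; \frac{1}{|\Scal_s|_{\g}}\int_{\Scal_s}\tr\bS\,dV_{\bs_s} \;\ge\; (n-1)\frac{\sn^{\prime}_K(s)}{\sn_K(s)},
\]
which rearranges to
\[
\frac{d}{ds}\log\!\left(\frac{|\Scal_s|_{\g}}{\sn_K(s)^{n-1}}\right)\;\ge\; 0,
\]
establishing that the ratio is non-decreasing.

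For the initial normalisation I would invoke the limiting behaviour $s\cdot\bS(s)\to\Id$ recorded in the Example of Section~\ref{sec:basic}, which forces $|\Scal_s|_{\g}/(\omega_{n-1}s^{n-1})\to 1$, together with $\sn_K(s)/s\to 1$ as $s\to 0$; hence $|\Scal_s|_{\g}/(\omega_{n-1}\sn_K(s)^{n-1})\to 1$. Combining the monotonicity with this limit gives the pointwise inequality $|\Scal_s|_{\g}\ge \omega_{n-1}\sn_K(s)^{n-1}$.

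The main point that requires comment, rather than a real obstacle, is the range of validity. One must know that $\sn_K(s)>0$ on the interval in question and that $\Scal_s$ remains a smooth cross-section. For $K\le 0$ this is automatic, while for $K>0$ the bound $s<\pi/\sqrt{K}$ is required; happily, the remark following Proposition~\ref{prop:RiccComp} observes that the upper curvature bound $\mathcal{R}_{\l}\le K$ precludes conjugate points on $\gamma_{\l}$ before affine distance $\pi/\sqrt{K}$, so within the null injectivity radius both $\sn_K(s)$ and $|\Scal_s|_{\g}$ are strictly positive and smooth, and the argument above applies without modification.
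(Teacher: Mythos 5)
Your proposal matches the paper's proof, which is stated simply as paralleling Theorem~\ref{AreaComparison} with inequality~\eqref{TrSectComp} in place of~\eqref{RicciComp}; your derivation of the monotonicity and the $s\to 0$ normalisation is exactly the intended argument. The closing remark on the range of validity and conjugate points is a sensible, correct addition but not something the paper spells out at this point.
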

\begin{proof}
The proof exactly parallels that of Theorem~\ref{AreaComparison}, but we use the inequality~\eqref{TrSectComp}, rather than~\eqref{RicciComp}.
\end{proof}

\begin{remark}
This area monotonicity theorem is, essentially, the opposite of the result of~\cite{LightCone} and Theorem~~\ref{AreaComparison}, giving a \emph{lower\/} bound on the area of the section of the null cone. Note, however, that the curvature condition required is an \emph{upper bound\/} on the curvature operator $\mathcal{R}_{\l}$ along the null geodesics, which is considerably stronger than, for example, an upper bound on the Ricci tensor. The fact that we require a stronger type of curvature bound for this type of theorem is familiar from similar considerations in Riemannian geometry.
\end{remark}

Finally, we have the following analogue of the G\"{u}nther volume comparison theorem:
\begin{theorem}
Let $(M, \g)$ be a Lorentzian manifold. Let $p \in M$, and assume that $\mathcal{R}_{\l} \le K$ for along each null generator $\gamma_{\l}$ of $N^+(p)$. Then the null volume of the set $\Ncal^+_s(p)$ satisfies
\[
|\Ncal^+_s(p)|_{\g} \ge V^+_K(s)
\]
for $s \ge 0$.
\end{theorem}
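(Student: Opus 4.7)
The plan is to reduce this volume statement to the area statement of the preceding theorem by integration in the affine parameter $s$. By the definition
\[
|\Ncal^+_s(p)|_{\g} = \int_0^s |\Scal_t|_{\g} \, dt,
\]
and by the immediately preceding theorem, we have the pointwise inequality
\[
|\Scal_t|_{\g} \ge \omega_{n-1} \sn_K(t)^{n-1} \qquad \text{for every } t \in (0, s]
\]
(valid throughout the domain under consideration, which is contained within the null injectivity radius and, if $K > 0$, within the parameter range $t < \pi/\sqrt{K}$ where the comparison function is positive). Integrating this inequality in $t$ from $0$ to $s$ and recalling the definition
\[
V^+_K(s) = \omega_{n-1} \int_0^s \sn_K(t)^{n-1} \, dt
\]
immediately yields $|\Ncal^+_s(p)|_{\g} \ge V^+_K(s)$.

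Optionally, one could also strengthen this to a monotonicity statement parallel to Theorem~\ref{thm:volumemonotonicity}, namely that the ratio $|\Ncal^+_s(p)|_{\g}/V^+_K(s)$ is non-decreasing in $s$ and tends to $1$ as $s \to 0$. This follows from the obvious dual of Lemma~\ref{CGTLemma}: if $f, g \colon [0, \infty) \to (0, \infty)$ and $f/g$ is non-decreasing, then $\int_0^r f / \int_0^r g$ is non-decreasing in $r$ (the proof is identical to the cited argument, with the direction of a single inequality reversed). Applying this dual lemma to $f(s) = |\Scal_s|_{\g}$ and $g(s) = \omega_{n-1} \sn_K(s)^{n-1}$, using the non-decreasing ratio established in the previous theorem, yields the stronger monotonicity assertion.

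I do not expect any genuine obstacle here: everything reduces to the area comparison bound, and the passage from area to volume is a straightforward one-variable integration. The only point that requires a little care is the behaviour at the lower endpoint $t = 0$, where both $|\Scal_t|_{\g}$ and $\omega_{n-1}\sn_K(t)^{n-1}$ vanish like $\omega_{n-1} t^{n-1}$; this is precisely what ensures both integrals converge and that the normalisation of the ratio tends to $1$ as $s \to 0^+$.
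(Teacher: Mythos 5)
Your proof is correct and follows essentially the same route as the paper's: the paper compares $\frac{d}{ds}|\Ncal^+_s(p)|_{\g}$ with $\frac{d}{ds}V^+_K(s)$ via the area lower bound and notes the common limit at $s=0$, which is exactly your integration argument. Your optional strengthening to a non-decreasing ratio via the dual of Lemma~\ref{CGTLemma} is a valid observation that the paper chooses not to state for this theorem, though it records the analogous monotonicity statement in the curvature-bounded-below case.
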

\begin{proof}
\[
\frac{d}{ds} |\Ncal^+_s(p)|_{\g} = |S^+(s)|_{\g} \ge \omega_{n-1} \sn_K(t)^{n-1} = \frac{d}{ds} V^+_K(s).
\]
Moreover, $|\Ncal^+_s(p)|_{\g}$ and $V^+_K(s)$ both converge to the null volume of the corresponding subset of the null cone in Minkowski space as $s \to 0$, so their ratio converges to $1$ as $s \to 0$.
\end{proof}

\section{Application to Ricci-flat four-manifolds}
\label{sec:4d}

We briefly outline a simple consequence of our results for the special case of four-dimensional Lorentzian manifolds that satisfy the vacuum Einstein condition $\Ric = 0$. We first define the \lq\lq area radius\rq\rq\ of the sphere $\Scal_s$ by the equality
\[
r(s) := \sqrt{\frac{|\Scal_s|_{\g}}{4 \pi}}.
\]
If one wishes to measure the deviation of properties of the null cone from that in flat Minkowski space, then a standard quantity that one must estimate\footnote{See, e.g.,~\cite{KR:Inventiones} for an analytical investigation of this and related objects in a low-regularity setting.} is the difference
\[
\tr\bS - \frac{2}{r(s)}.
\]

Our results give the following, simple estimate:

\begin{proposition}
\label{prop:arearadius}
Let $(M, \g)$ be a Ricci-flat Lorentzian four-manifold. Let $p \in M$ and $K \ge 0$ a constant such that, along the null geodesics $\gamma_{\l}$ emanating from $p$, the curvature operator $\mathcal{R}_{\l}$ satisfies the condition that $\mathcal{R}_{\l} \le K$. Then, for all $s > 0$, we have
\bel{arearadRicci}
\tr\bS - \frac{2}{r(s)} \le 0
\ee
and, for $0 \le s < \pi/\sqrt{K}$,
\bel{arearadSec}
\tr\bS - \frac{2}{r(s)} \ge - 2 \sqrt{K} \tan \left( \frac{\sqrt{K}}{2} s\right).
\ee
\end{proposition}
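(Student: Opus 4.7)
The plan is to recognise that both inequalities arise by combining two facts already at hand: a pointwise bound on $\tr\bS$ from the Riccati comparison (Proposition~\ref{prop:RiccComp}), and an integrated area bound (which in turn gives a bound on $r(s)$) from the monotonicity theorems in Section~\ref{sec:comparison}. The pointwise bound and the area bound go in opposite directions, which is precisely what is needed to squeeze $\tr\bS - 2/r(s)$ from one side.

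For \eqref{arearadRicci}, note that the Ricci-flat hypothesis gives $\Ric(\gammad_\l,\gammad_\l)=0$, so we may take $c=0$ in Proposition~\ref{prop:RiccComp}. The pointwise bound \eqref{RicciComp} with $n=3$ then reads $\tr\bS \le 2/s$. Simultaneously, Theorem~\ref{AreaComparison} (again with $c=0$) gives $|\Scal_s|_{\g} \le 4\pi s^{2}$, i.e.\ $r(s)\le s$, and hence $2/r(s)\ge 2/s$. Subtracting produces \eqref{arearadRicci} immediately.

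For \eqref{arearadSec}, the curvature hypothesis $\mathcal{R}_{\l}\le K$ engages the second half of Proposition~\ref{prop:RiccComp}: by \eqref{TrSectComp} with $n=3$,
\[
\tr\bS \;\ge\; 2\,\frac{\sn'_K(s)}{\sn_K(s)} \;=\; 2\sqrt{K}\,\cot(\sqrt{K}\,s),
\]
valid on $0<s<\pi/\sqrt{K}$. The area monotonicity theorem with upper curvature bound (the G\"unther-type result just proved) yields $|\Scal_s|_{\g} \ge 4\pi\,\sn_K(s)^{2}$, so $r(s) \ge \sn_K(s) = \sin(\sqrt{K}\,s)/\sqrt{K}$, and therefore $2/r(s) \le 2\sqrt{K}/\sin(\sqrt{K}\,s)$. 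Subtracting,
\[
\tr\bS - \frac{2}{r(s)} \;\ge\; 2\sqrt{K}\,\frac{\cos(\sqrt{K}\,s)-1}{\sin(\sqrt{K}\,s)}
\;=\; -2\sqrt{K}\,\tan\!\left(\tfrac{\sqrt{K}}{2}s\right),
\]
where the last equality is the half-angle identity $(\cos\theta-1)/\sin\theta = -\tan(\theta/2)$, giving \eqref{arearadSec}.

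There is no real obstacle: the proposition is essentially a packaging of Proposition~\ref{prop:RiccComp} together with the area bounds of Section~\ref{sec:comparison}, using that in the Ricci-flat four-dimensional case both ingredients specialise cleanly to $c=0$, $n-1=2$, $\omega_{n-1}=4\pi$. The only mildly non-obvious moment is applying the half-angle identity to recognise the clean right-hand side of \eqref{arearadSec}; once written, it is a one-line verification.
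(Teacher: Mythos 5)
Your proposal matches the paper's proof exactly: both combine the pointwise Riccati bounds $2\sn'_K(s)/\sn_K(s)\le\tr\bS\le 2/s$ from Proposition~\ref{prop:RiccComp} with the area bounds $4\pi\sn_K(s)^2\le|\Scal_s|_\g\le 4\pi s^2$ (hence $\sn_K(s)\le r(s)\le s$), and then subtract and apply the half-angle identity. No differences worth noting.
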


\begin{remark}
\label{rem:arearad}
Equation~\eqref{arearadRicci} shows that the mean curvature of the null slices for a cone in a Ricci-flat is bounded above by the flat-space expression in terms of the area radius. It follows from~\eqref{arearadSec} that, if $\g$ is Ricci-flat and the curvature operator is bounded above, then, given any $\eps > 0$, then there exists $s_0 > 0$ such that
\[
- \eps \le \tr\bS - \frac{2}{r_V(s)} \le 0 \mbox{ for $s \le s_0$}.
\]
As such, for such manifolds, we may put explicit bounds on the deviation of $\tr\bS$ from the flat-space expression in terms of the area radius, for small $s$.
\end{remark}

\begin{proof}[Proof of Proposition~\ref{prop:arearadius}]
Since $\g$ is Ricci-flat, we may take $c = 0$ in our Ricci curvature bound. Proposition~\ref{prop:RiccComp} then yields the inequalities
\[
2 \frac{\sn^{\prime}_K(s)}{\sn_K(s)} \le \tr\bS(s) \le \frac{2}{s}.
\]
Our area comparison results, in addition, imply that
\[
4 \pi \, \sn_K(s)^2 \le |\Scal_s|_{\g} \le 4 \pi s^2.
\]
Therefore the area radius satisfies
\[
|\sn_K(s)| \le r(s) \le s.
\]

We therefore have
\[
\tr\bS \le \frac{2}{s} \le \frac{2}{r(s)},
\]
giving the second of our required inequalities. In addition,
\begin{align*}
\tr\bS - \frac{2}{r(s)}
&\ge 2 \frac{\sn^{\prime}_K(s)}{\sn_K(s)} - \frac{2}{\sn_K(s)}
= 2 \sqrt{K} \frac{\cos(\sqrt{K}s) - 1}{\sin(\sqrt{K}s)}
\\
&= - 2 \sqrt{K} \tan \left( \frac{\sqrt{K}}{2} s\right),
\end{align*}
as required.
\end{proof}

\vs
When considering lower bounds on Ricci curvature, it is perhaps volume monotonicity that plays a more important role than area comparison theorems. Therefore, we define the \lq\lq volume radius\rq\rq\ of the set $\Ncal^+_s(p)$ by the relation
\[
r_V(s) := \left( \frac{3 |\Ncal^+_s(p)|_{\g}}{4 \pi} \right)^{1/3}.
\]
Our volume comparison theorems then state that
\[
|\sn_K(s)| \le r_V(s) \le s.
\]
Therefore, with an identical proof to the previous Proposition, we have the following result:

\begin{proposition}
\label{prop:volumeradius}
Let $(M, \g)$ be a Ricci-flat Lorentzian four-manifold. Let $p \in M$ and $K \ge 0$ a constant such that, along the null geodesics $\gamma_{\l}$ emanating from $p$, the curvature operator $\mathcal{R}_{\l}$ satisfies the condition that $\mathcal{R}_{\l} \le K$. Then, for all $s > 0$, we have
\[
\tr\bS - \frac{2}{r_V(s)} \le 0
\]
and, for $0 \le s < \pi/\sqrt{K}$,
\[
\tr\bS - \frac{2}{r_V(s)} \ge - 2 \sqrt{K} \tan \left( \frac{\sqrt{K}}{2} s\right).
\]
\end{proposition}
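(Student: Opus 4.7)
The plan is to follow the proof of Proposition~\ref{prop:arearadius} essentially verbatim, replacing the area-comparison inputs by their volume-comparison counterparts. As in that proof, Ricci-flatness allows us to take $c = 0$ in Proposition~\ref{prop:RiccComp}, yielding the two-sided bound
\[
2\,\frac{\sn^{\prime}_K(s)}{\sn_K(s)} \le \tr\bS(s) \le \frac{2}{s}.
\]
Given this, the only ingredient needed in order to reproduce the remainder of the earlier argument (including the trigonometric simplification $\frac{\cos(\sqrt{K}s)-1}{\sin(\sqrt{K}s)} = -\tan(\sqrt{K}s/2)$) is the sandwich estimate $|\sn_K(s)| \le r_V(s) \le s$ asserted in the paragraph preceding the statement.

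The upper bound $r_V(s) \le s$ is immediate from Theorem~\ref{thm:volumemonotonicity} applied with $c = 0$, since in that case $V^+_0(s) = \tfrac{4\pi}{3} s^3$ and the definition of $r_V$ gives $r_V(s)^3 \le s^3$. The lower bound, however, is \emph{not} a direct restatement of the G\"{u}nther-type inequality $|\Ncal^+_s(p)|_{\g} \ge V^+_K(s)$; it requires the auxiliary elementary inequality
\[
\int_0^s \sn_K(t)^2\,dt \ge \tfrac{1}{3}\,\sn_K(s)^3,
\]
which transforms the bound on $V^+_K(s) = 4\pi\int_0^s \sn_K(t)^2\,dt$ into a bound on $\tfrac{4\pi}{3}\sn_K(s)^3$, and hence into $r_V(s) \ge |\sn_K(s)|$. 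I would verify this inequality by noting that both sides vanish at $s = 0$ and that the derivative of the difference equals $\sn_K(s)^2\bigl(1 - \sn^{\prime}_K(s)\bigr)$, which is non-negative precisely when $K \ge 0$, since $\sn^{\prime}_K(s) = \cos(\sqrt{K}s) \le 1$ in that regime. This is exactly consistent with the hypothesis $K \ge 0$ in the proposition.

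The main (mild) obstacle is really this auxiliary calibration: the bound $r_V(s) \ge |\sn_K(s)|$ is not a tautological consequence of the volume comparison theorem, but uses the specific non-negative-curvature structure of the comparison model. Once it is in place, the rest of the argument is mechanical and parallels Proposition~\ref{prop:arearadius} line by line, with $r(s)$ replaced by $r_V(s)$: combining $\frac{1}{s} \le \frac{1}{r_V(s)} \le \frac{1}{\sn_K(s)}$ with the two-sided bound on $\tr\bS$ yields the upper inequality $\tr\bS \le 2/r_V(s)$ at once, and the lower inequality after the same tangent half-angle simplification as before.
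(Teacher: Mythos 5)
Your proof is correct and follows the same route as the paper, which simply declares the proof identical to that of Proposition~\ref{prop:arearadius} after asserting the sandwich estimate $|\sn_K(s)| \le r_V(s) \le s$. What you add, rightly, is that this sandwich is \emph{not} a tautological restatement of the two volume inequalities. The upper bound $r_V(s) \le s$ does fall out of $|\Ncal^+_s(p)|_{\g} \le V^+_0(s) = \tfrac{4\pi}{3}s^3$ at once, but the lower bound $r_V(s) \ge |\sn_K(s)|$ requires converting $|\Ncal^+_s(p)|_{\g} \ge V^+_K(s) = 4\pi\int_0^s \sn_K(t)^2\,dt$ into a bound on $\tfrac{4\pi}{3}\sn_K(s)^3$, i.e.\ the calibration inequality
\[
\int_0^s \sn_K(t)^2\,dt \;\ge\; \tfrac{1}{3}\,\sn_K(s)^3 .
\]
Your verification is sound: the difference vanishes at $s=0$ and has derivative $\sn_K(s)^2\bigl(1-\sn_K'(s)\bigr) \ge 0$ for $K\ge 0$ on $[0,\pi/\sqrt{K})$, since $\sn_K'(s)=\cos(\sqrt{K}s)\le 1$ there. (For $K=0$ the difference is identically zero.) Noting that this is precisely where the hypothesis $K \ge 0$ enters the volume-radius version of the argument is a genuine addition; the paper's own text elides this step. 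Once the sandwich is in place, the remainder is indeed line-by-line identical to Proposition~\ref{prop:arearadius} with $r(s)$ replaced by $r_V(s)$.
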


\begin{remark}
It follows from this Proposition that the observations made in Remark~\ref{rem:arearad} concerning the area radius also hold true for the volume radius.
\end{remark}

\begin{remark}
Clearly, our result only actually requires that the Ricci curvature of $\g$ be non-negative along the null geodesics $\gamma_{\l}$. Our results also generalise to arbitrary dimension in the obvious fashion.
\end{remark}

\section{Model spaces}
\label{sec:model}
In Riemannian geometry, comparison theorems generally compare a geometrical quantity (e.g. volumes and areas of sets) on a manifold that satisfies a curvature bound with corresponding quantities in a model space of, for example, constant curvature. Before studying the model geometries that one should use for comparison in our theorems, we first note the following simple facts:
\begin{enumerate}
\item Let $(M_c, \g_c)$ denote the simply-connected, $n$-dimensional Riemannian manifold of constant curvature $c$. Given $p \in M_c$, the area of the distance sphere $S(p, s)$ is equal to $\omega_{n-1} \sn_c(t)^{n-1}$. We denote this quantity by $S_c(s)$.
\item In the same space, the volume of the distance ball $B(p, s)$ is equal to the quantity $V^+_c(s)$. We denote this quantity by $V_c(s)$.
\end{enumerate}

Our comparison theorems may therefore be restated as giving comparison results between areas of spherical slices of a null cone in $(n+1)$-dimensional Lorentzian manifolds and spheres in $n$-dimensional constant curvature spaces, and corresponding volumes in $(n+1)$-dimensional Lorentzian manifolds, and the corresponding quantities in $n$-dimensional constant curvature Riemannian manifolds:

\begin{theorem}
Let $(M, \g)$ be a Lorentzian metric, the Ricci tensor of which obeys the condition that $\Ric(\gamma', \gamma') \ge 0$ along all future-directed null geodesics from the point $p$. Then the ratios
\[
\frac{|\Scal_s|_{\g}}{S_c(s)}, \qquad \frac{|\Ncal^+_s(p)|_{\g}}{V_c(s)}
\]
are non-increasing as functions of $s$ and converge to $1$ as $s \to 0$.

Similarly, let $(M, \g)$ be a Lorentzian manifold such that $\mathcal{R}_{\l} \le K$ for along each null generator $\gamma_{\l}$ of $N^+(p)$. Then the ratio
\[
s \mapsto \frac{|\Scal_s|_{\g}}{S_K(s)}
\]
is non-decreasing and converges to $1$ as $s \to 0$. In addition, the null volume of the set $\Ncal^+_s(p)$ satisfies
\[
|\Ncal^+_s(p)|_{\g} \ge V_K(s)
\]
for $s \ge 0$.
\end{theorem}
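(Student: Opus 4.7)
The plan is to recognise that this theorem is not a new result but a re-packaging of the four comparison theorems already proved in Section~\ref{sec:comparison}, simply reinterpreted through the identifications listed in items~(1)--(2) immediately preceding the statement. Thus the proof will consist almost entirely of matching notation.

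First, I would invoke item~(1): in the simply-connected $n$-dimensional Riemannian space form of constant curvature $c$, the area of a distance sphere of radius $s$ is precisely $S_c(s) = \omega_{n-1}\sn_c(s)^{n-1}$, which is the denominator appearing in~\eqref{mono1}. Theorem~\ref{AreaComparison} (with the lower Ricci bound $\Ric(\gammad_{\l},\gammad_{\l}) \ge c(n-1)$, which the hypothesis of the present theorem is meant to encode via the choice of $c$) therefore states exactly that $s \mapsto |\Scal_s|_{\g}/S_c(s)$ is non-increasing and tends to $1$ as $s \to 0$. Likewise, item~(2) identifies $V_c(s)$ with the quantity $V^+_c(s) = \omega_{n-1}\int_0^s \sn_c(t)^{n-1}\,dt$ appearing in~\eqref{mono2}, so the statement about $|\Ncal^+_s(p)|_{\g}/V_c(s)$ is precisely the conclusion of Theorem~\ref{thm:volumemonotonicity}.

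For the second half of the theorem, under the upper curvature bound $\mathcal{R}_{\l} \le K$, the same identification $S_K(s) = \omega_{n-1}\sn_K(s)^{n-1}$ converts the area monotonicity theorem preceding the Günther statement into the assertion that $|\Scal_s|_{\g}/S_K(s)$ is non-decreasing with limit $1$. Similarly, $V_K(s) = V^+_K(s)$, so the inequality $|\Ncal^+_s(p)|_{\g} \ge V_K(s)$ is just the Günther-type volume comparison theorem above.

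There is no real obstacle here: the only non-trivial input is the verification of items~(1)--(2), which are standard facts about Riemannian space forms (a direct computation using the metric $dr^2 + \sn_c(r)^2 \, d\Omega^2_{n-1}$ on $(M_c,\g_c)$). The main interpretive point worth emphasising in the proof is that the comparison spaces for our Lorentzian null-cone theorems are \emph{$n$-dimensional Riemannian} space forms, not $(n+1)$-dimensional Lorentzian ones; this mild dimensional shift is the conceptual novelty of the reformulation, and once it is accepted the proof is a one-line invocation of each of the four earlier theorems.
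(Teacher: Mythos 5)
Your proposal is correct and matches the paper's treatment: the paper offers no separate proof of this theorem, precisely because it is, as you observe, a direct restatement of Theorem~\ref{AreaComparison}, Theorem~\ref{thm:volumemonotonicity}, and the two upper-bound theorems of Section~\ref{sec:comparison} under the dictionary $S_c(s) = \omega_{n-1}\sn_c(s)^{n-1}$ and $V_c(s) = V^+_c(s)$ supplied by items~(1)--(2). One small point worth flagging, which you implicitly handle: the hypothesis as written reads $\Ric(\gamma',\gamma') \ge 0$, yet the conclusion involves $S_c$ and $V_c$ for general $c$; the intended reading (consistent with the earlier theorems being invoked) is $\Ric(\gammad_{\l},\gammad_{\l}) \ge c(n-1)$, and your parenthetical remark correctly supplies this.
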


\subsection{Lorentzian model spaces}
Although stating our results in terms of comparison with Riemannian constant curvature spaces is of interest, it would be more fitting to state our results as comparing areas of slices of null cones with, for example, corresponding slices of cones in a model Lorentzian manifold. Therefore, we now briefly consider model Lorentzian manifolds on which our estimates are sharp. Based upon our different curvature bounds, there are two types of model spaces that we should naturally consider. Firstly, we consider Lorentzian manifolds where we have $\Ric(\gammad_{\l}, \gammad_{\l}) = c (n-1)$ along the null geodesics from a given point $p$ in the manifold, and where the differential inequality satisfied by the mean curvature~\eqref{Hinequality} becomes an equality. Secondly, we consider Lorentzian manifolds where the curvature operator $\mathcal{R}_{\l}$ equals $K \,\Id$ along such null geodesics. Note that we cannot expect these conditions to uniquely determine a model geometry since, for example, the Ricci curvature condition with $c = 0$ is satisfied by all of the constant curvature spaces.

\vs
Our first result is that the latter class of model spaces includes the former:
\begin{lemma}
Let $(M, \g)$ satisfy the curvature equality
\bel{RicEq}
\Ric(\gammad_{\l}, \gammad_{\l}) \ge c (n-1),
\ee
and the equality
\bel{Hequality}
H'(s) + H(s)^2 = - c
\ee
along all null geodesics from $p \in M$. Then, along the same geodesics, the curvature operator satisfies
\bel{Rl}
\mathcal{R}_{\l} = c \, \Id,
\ee
\end{lemma}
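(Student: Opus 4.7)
The plan is to revisit the derivation of the differential inequality \eqref{Hinequality} in the proof of Proposition~\ref{prop:RiccComp} and identify precisely which intermediate inequalities must saturate in order for \eqref{Hequality} to hold. I expect that saturation will force both pointwise equality in the Cauchy--Schwarz step \eqref{CS} and in the Ricci bound \eqref{RicEq}, and that these two equalities together pin down the full curvature operator via the matrix Riccati equation \eqref{SRiccati}.

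More concretely, I would first take the trace of \eqref{SRiccati} to obtain
\[
(n-1) H'(s) = -\tr\mathcal{R}_{\l}(s) - \tr\bS(s)^{2}.
\]
Inserting the Cauchy--Schwarz estimate $\tr \bS^{2} \ge (n-1) H^{2}$ from \eqref{CS} and the hypothesis $\tr\mathcal{R}_{\l} = \Ric(\gammad_{\l},\gammad_{\l}) \ge c(n-1)$ reproduces the inequality $H' + H^{2} \le -c$. Under the assumed equality \eqref{Hequality}, neither slack can be nonzero at any point of the geodesic, so
\[
\tr \bS(s)^{2} = (n-1)\, H(s)^{2}, \qquad \tr\mathcal{R}_{\l}(s) = c(n-1).
\]
The first identity is the equality case of Cauchy--Schwarz for the symmetric operator $\bS(s)$ on $T_{\gamma_{\l}(s)}\Scal_{s}$, which forces $\bS(s) = H(s) \, \Id_{T\Scal_s}$ (all eigenvalues coincide with their average).

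With $\bS = H \, \Id$ established along $\gamma_{\l}$, the matrix Riccati equation \eqref{SRiccati} reduces to the operator identity
\[
H'(s) \, \Id = - \mathcal{R}_{\l}(s) - H(s)^{2} \, \Id,
\]
so that $\mathcal{R}_{\l}(s) = -\bigl(H'(s) + H(s)^{2}\bigr)\, \Id = c\, \Id$, which is exactly \eqref{Rl}. The only subtlety is the usual one of justifying that pointwise equality in the integrated inequality $H' + H^{2} \le -c$ forces the integrand equalities along the entire geodesic; this is immediate here because both sides of the inequality are smooth functions of $s$ away from $p$, and the difference is a nonpositive smooth function that vanishes identically. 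I do not foresee a serious obstacle beyond bookkeeping, since the matrix Riccati equation does all of the real work once the Cauchy--Schwarz saturation has been exploited.
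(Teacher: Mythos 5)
Your proof is correct and follows essentially the same route as the paper: take the trace of \eqref{SRiccati}, observe that the hypothesis \eqref{Hequality} forces both the Ricci bound and the Cauchy--Schwarz step to saturate (the paper phrases the latter as $\tr(\sigma^2)=0$ where $\sigma$ is the trace-free part of $\bS$, which is equivalent), deduce $\bS = H\,\Id$, and substitute back into the matrix Riccati equation to read off $\mathcal{R}_{\l} = c\,\Id$. If anything your version is marginally tidier, since the paper's intermediate computation of $H(s)=\sn'_c(s)/\sn_c(s)$ is not actually needed for the final step.
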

\begin{proof}
Taking the trace of the Riccati equation~\eqref{SRiccati}, we deduce that
\[
H' = - H^2 - \frac{1}{n-1} \left[ \tr (\sigma^2) + \Ric(\gammad_{\l}, \gammad_{\l}) \right],
\]
where $\sigma := \bS - H \,\Id$ denotes the trace-free part of the shape operator. Since we have, by assumption, that $H' = - H^2 - c$ and $\Ric(\gammad_{\l}, \gammad_{\l}) \ge c(n-1)$, it follows that $\Ric(\gammad_{\l}, \gammad_{\l}) = c(n-1)$ and $\tr(\sigma^2) = 0$. This implies that $\sigma = 0$, and therefore that
\[
\bS = H \,\Id.
\]
Moreover, the differential equation~\eqref{Hequality} along with the asymptotic conditions on $H(s)$ as $s \to 0$, imply that
\[
H(s) = \frac{\sn^{\prime}_c(s)}{\sn_c(s)}.
\]
Therefore,
\[
\nabla_{\l} \bS + \bS^2 = - c \, \Id,
\]
as required.
\end{proof}

The fact that we wish the trace of the second-fundamental form to vanish suggests that we consider (locally) conformally flat manifolds. For conformally flat metrics, all of the curvature information is contained in the Ricci tensor and one may easily check that, if the metric $\g$ is conformally flat and the Ricci tensor satisfies~\eqref{RicEq}, then the curvature operator takes the form~\eqref{Rl}.

As mentioned earlier, our curvature condition will not lead to a unique model geometry with which we should compare. As such, our comparison results will not, in general, directly lead to a rigidity condition if the estimates are sharp.\footnote{Rigidity results were derived in~\cite{LightCone}, when additional conditions were imposed.} Since we have no unique model geometry, we simply present some Lorentzian metrics that have the required properties.

\begin{example}
Let $\g_{S^{n-1}}$ denote the standard metric on the unit $(n-1)$-sphere. We then define $(M_c, \g_c)$ as follows:
\[
\g_c := \frac{1}{1 + c t^2} \left[ dt^2 + dr^2 + r^2 \g_{S^{n-1}} \right],
\]
where the coordinates $(t, r)$ lie in the range:
\begin{itemize}
\item $t, r < -\tfrac{1}{\sqrt{|c|}}$ if $c < 0$;
\item $t, r \in \R$ if $c \ge 0$.
\end{itemize}
Taking the reference point, $p_c$, to be the origin $t = r = 0$, and the reference vector $\T_c = \partial_t \in T_{p_c} M$, then it is straightforward to check that
\[
\Scal_s = \left\{ t = r = \frac{\sn_c(s)}{\sn^{\prime}_c(s)} \right\},
\]
with induced metric
\[
\bs_s = \sn_c(s)^2 \g_{S^{n-1}}.
\]
Therefore, $|\Scal_s|_{\g_c} = \omega_{n-1} \, \sn_c(s)^{n-1}$ and $|\Ncal^+_s(p)|_{\g_c} = \omega_{n-1} \int_0^s \sn_c^{n-1}(t) \, dt$. Moreover, the mean curvature of the sphere $\Scal_s$ is $H = \frac{\sn^{\prime}_c(s)}{\sn_c(s)}$, as required. Finally, letting $\gamma_{\l}$ be the affinely parametrised null geodesic with respect to the metric $\g_c$, then a standard curvature calculation shows that
\[
\Ric(\gammad_{\l}, \gammad_{\l}) = c (n-1).
\]
\end{example}

\

The pointed Lorentzian manifolds with reference vector $(M_c, \g_c, p_c, \T_c)$ thus defined have the correct properties to be viewed as model geometries for our comparison theorems.\footnote{As mentioned above, however, they are not unique in this respect.} As such, we may reformulate our comparison and monotonicity results in the following fashion.

Let $(M, \g)$ be a Lorentzian manifold, $p \in M$ and $\T \in T_p M$ a reference future-directed, unit, time-like vector. Let $c$ be a real constant such that $\Ric(\gammad_{\l}, \gammad_{\l}) \ge c(n-1)$ along the future-directed null geodesics from $p$. Given the reference model $(M_c, \g_c, p_c, \T_c)$ defined as above, let $\varphi \colon T_p M \to T_{p_c} M_c$ be a linear isometry with the property that $\varphi_* \T = \T_c$. For sufficiently small $s > 0$, given the sets $\Scal_s, \Ncal^+_s(p) \subset M$, the \lq\lq transplantation\rq\rq\ map\footnote{We follow the terminology of~\cite{ES}.} $\widetilde{\varphi} := \exp_{p_c} \circ \varphi \circ \exp_p^{-1}$ allows us to define corresponding subsets $\widetilde{\varphi}(\Scal_s), \widetilde{\varphi}(\Ncal^+_s(p))$ in the manifold $M_c$. (Since the map $\varphi$ is an isometry, these are the same sets as we would get by applying the constructions in Section~\ref{sec:basic} to the Lorentzian manifold $(M_c, \g_c)$ based at the point $p_c$ with reference vector $\T_c$.) We denote the area and volume of these subsets of $M_c$ by $|\widetilde{\varphi}(\Scal_s)|_c$ and $|\widetilde{\varphi}(\Ncal^+_s(p))|_c$, respectively. In precisely the same fashion, we may construct a similar map from a Lorentzian manifold $(M, \g)$ satisfying $\mathcal{R}_{\l} \le K$ to the model space $(M_K, \g_K)$. Our results may then be recast as follows:

\begin{theorem}
Let $(M, \g)$ be a Lorentzian metric, the Ricci tensor of which obeys the condition that $\Ric(\gamma', \gamma') \ge 0$ along all future-directed null geodesics from the point $p$. Let $(M_c, \g_c, p_c, \T_c)$ be the model space as above, and $\widetilde{\varphi}$ the corresponding transplantation map. Then the ratios
\[
\frac{|\Scal_s|_{\g}}{|\widetilde{\varphi}(\Scal_s)|_c}, \qquad
\frac{|\Ncal^+_s(p)|_{\g}}{|\widetilde{\varphi}(\Ncal^+_s(p))|_c}
\]
are non-increasing as functions of $s$ and converge to $1$ as $s \to 0$.

Similarly, let $(M, \g)$ be a Lorentzian manifold such that $\mathcal{R}_{\l} \le K$ for along each null generator $\gamma_{\l}$ of $N^+(p)$. Let $(M_K, \g_K, p_K, \T_K)$ be the corresponding model space. Then the ratio
\[
s \mapsto \frac{|\Scal_s|_{\g}}{|\widetilde{\varphi}(\Scal_s)|_c}
\]
is non-decreasing and converges to $1$ as $s \to 0$. In addition, the null volume of the set $\Ncal^+_s(p)$ satisfies
\[
|\Ncal^+_s(p)|_{\g} \ge |\widetilde{\varphi}(\Ncal^+_s(p))|_c
\]
for $s \ge 0$.
\end{theorem}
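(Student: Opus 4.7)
The plan is to reduce this statement to the area/volume comparison results already proved in Section~\ref{sec:comparison} (Theorems~\ref{AreaComparison} and~\ref{thm:volumemonotonicity}, and their dual G\"unther-type counterparts). The only new content is the identification of the right-hand side ratios, computed in the Lorentzian model $(M_c, \g_c, p_c, \T_c)$, with the explicit quantities $\omega_{n-1} \sn_c(s)^{n-1}$ and $V^+_c(s)$ appearing in those theorems; once this identification is made, the stated monotonicity results transfer verbatim.

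The first step is to check that the transplantation map produces the natural slice of the null cone in the model space. Since $\varphi \colon T_p M \to T_{p_c} M_c$ is a linear isometry with $\varphi_* \T = \T_c$, it maps the set $S^+_1(0) \subset T_p M$ of future-directed null vectors with $\langle \T, \l\rangle = -1$ bijectively onto the corresponding set $S^+_1(0) \subset T_{p_c} M_c$ normalised against $\T_c$. Because $\exp_p$ carries $s \, \l$ to $\gamma_{\l}(s)$ and similarly for $\exp_{p_c}$, the definition $\widetilde{\varphi} := \exp_{p_c} \circ \varphi \circ \exp_p^{-1}$ makes $\widetilde{\varphi}(\Scal_s)$ coincide exactly with the sphere $\Scal_s$ constructed from the pointed data $(M_c, \g_c, p_c, \T_c)$ using the procedure of Section~\ref{sec:basic}, and likewise $\widetilde{\varphi}(\Ncal^+_s(p))$ coincides with $\Ncal^+_s(p_c)$.

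The second step is to read off the model quantities from the computation carried out in the Example preceding the theorem. That calculation shows
\[
|\widetilde{\varphi}(\Scal_s)|_c = |\Scal_s|_{\g_c} = \omega_{n-1} \, \sn_c(s)^{n-1},
\qquad
|\widetilde{\varphi}(\Ncal^+_s(p))|_c = |\Ncal^+_s(p_c)|_{\g_c} = V^+_c(s).
\]
The analogous identities hold for the upper-curvature-bound model $(M_K, \g_K, p_K, \T_K)$, with $c$ replaced by $K$.

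Substituting these equalities into the ratios appearing in the statement reduces the Ricci-lower-bound half of the theorem to the already proven monotonicity of $|\Scal_s|_{\g} / (\omega_{n-1} \sn_c(s)^{n-1})$ and $|\Ncal^+_s(p)|_{\g} / V^+_c(s)$ (Theorems~\ref{AreaComparison} and~\ref{thm:volumemonotonicity}), while the curvature-upper-bound half reduces to the dual area-monotonicity theorem and the G\"unther-type volume inequality from the same section. The normalisation of the limiting ratios to $1$ as $s \to 0$ is inherited from those theorems. There is no substantive obstacle here; the only minor care required is the justification in step one that $\widetilde{\varphi}$ commutes with the formation of $\Scal_s$, which boils down to the fact that $\varphi$ is a linear isometry preserving the reference timelike vector and hence the normalisation~\eqref{nullnormalisation}.
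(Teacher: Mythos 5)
Your proposal matches the paper's own (implicit) argument: the theorem is a restatement of Theorems~\ref{AreaComparison} and~\ref{thm:volumemonotonicity} and their G\"unther-type duals, obtained by identifying the transplanted sets with the model slices and substituting the computed model quantities $|\Scal_s|_{\g_c} = \omega_{n-1}\sn_c(s)^{n-1}$ and $|\Ncal^+_s(p_c)|_{\g_c} = V^+_c(s)$ from the preceding Example. The observation that $\widetilde{\varphi}$ commutes with the construction of $\Scal_s$ because $\varphi$ is a linear isometry preserving $\T$ is exactly the justification the paper gives in the paragraph before the theorem.
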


\section{Final remarks}
\label{sec:remarks}

We have implicitly assumed in our analysis that we are considering values of $s$ less than the null injectivity radius at $p$, so that the exponential map defines a global diffeomorphism between an open neighbourhood of a subset of the null cone $T_p M$ and a corresponding open neighbourhood of a subset of the null cone of $p$ in $M$. Recall that a null geodesic $\gamma_{\l}$ from a point $p$ in a geodesically complete Lorentzian manifold will be maximising until the cut point $\gamma_{\l}(s_0)$ where either $\gamma_{\l}(s_0)$ is conjugate to $p$ along $\gamma_{\l}$ or there exists a distinct null geodesic from $p$ that also passes through $\gamma_{\l}(s_0)$. For $s > s_0$, there exists a time-like geodesic from $p$ to the point $\gamma_{\l}(s)$, so $\gamma_{\l}(s)$ no longer lies on the boundary of the causal future of $p$. In line with Gromov's approach to volume monotonicity theorems~\cite{Gromov:Book}, our volume monotonicity result Theorem~\ref{thm:volumemonotonicity} may be extended past the null injectivity radius by cutting off the volume integral once our null geodesics intersect the null cut locus of $p$. Such a truncation of the volume integral will, generally, decrease the volume integral in the numerator of the ratio $|\Ncal^+_s(p)|_{\g} / V^+_c(s)$, and will therefore strengthen the monotonic behaviour.

Our results may be generalised in an obvious fashion to apply open subsets of the space of null directions at $p$, in particular null neighbourhoods of a given null geodesic. If we wish to lower the regularity of our metric $\g$ then, in the usual spirit of synthetic geometry, one could adopt our volume monotonicity theorem as a \emph{definition\/} of lower Ricci curvature bounds in null directions. It would be particularly interesting to know whether one could, for example, prove a version of the Penrose singularity theorem or positivity of the Bondi mass with this definition of a lower bound on the Ricci curvature. For a definition of lower and upper curvature bounds in the sense of bounds on our operator $\mathcal{R}_{\l}$ then, by analogy with the theory of Alexandrov spaces, it would probably be more appropriate to base such a definition on a Lorentzian version of the Toponogov comparison theorem, such as that discussed in~\cite{AB}.

\appendix
\section{Newman--Penrose formalism}
\label{sec:NP}

We now briefly show how, in four dimensions, we may carry out all of our calculations in Newman--Penrose formalism. Note that, unlike the body of this paper, this section is not self-contained. Background material on Newman--Penrose formalism may be found in, for example,~\cite[Chapter~4]{PR}.

\vs
In Newman--Penrose formalism, the fact that $\nabla_{\l} \l = 0$ implies that $\kappa = 0$ and $\eps + \overline{\eps} = 0$. Imposing that $\left[ \m, \mbar \right]$ has no $\n$ component is equivalent to imposing that $\rho$ be real, while imposing that it have no $\l$ component is equivalent to reality of $\rho'$. Changing $\m$ and $\mbar$ by a phase, we may impose that $\nabla_{\l} \m \propto \l$ and $\nabla_{\l} \mbar \propto \l$, which implies that $\eps - \overline{\eps} = 0$. This completely fixes the basis vectors $\l, \m$ and $\mbar$. The vector field $\n$ on $\Ncal^+_s(p) \setminus \{ p \}$ is then uniquely determined by the null orthogonality conditions. We may, therefore, assume that spin-coefficients satisfy
\[
\kappa = 0, \quad \eps = 0, \quad \overline{\rho} = \rho, \quad \overline{\rho'} = \rho'.
\]
The Newman--Penrose equations that we require are
\begin{subequations}
\begin{align}
\frac{d}{ds} {\rho} &= {\rho}^2 + {\sigma} {\overline{\sigma}} + \Phi_{00}
\label{NPrho}
\\
\frac{d}{ds} {\sigma} &= 2 \rho \sigma + \Psi_{0}
\label{NPsigma}
\end{align}
\label{NP}
\end{subequations}

\subsection{Minkowski space}
When calculating geometrical quantities related to the spheres $\Scal_s$ in an arbitrary Lorentzian manifold, we will need to fix various constants that appear by comparing the asymptotic behaviour as $s \to 0$ to the values of the corresponding quantities in flat Minkowski space. We therefore summarise, here, the values of all relevant quantities in Minkowski space. Here, we would take
\begin{align*}
\l &= \frac{1}{\sqrt{2}} \left( \partial_t + \partial_r \right),
&\n &= \frac{1}{\sqrt{2}} \left( \partial_t - \partial_r \right),
\\
\m &= \frac{1}{\sqrt{2} r} \left( \partial_{\theta} - \frac{i}{\sin\theta} \partial_{\phi} \right),
&\mbar &= \frac{1}{\sqrt{2} r} \left( \partial_{\theta} + \frac{i}{\sin\theta} \partial_{\phi} \right).
\end{align*}
In $(t, r, \theta, \phi)$ coordinates, the geodesic $\gamma_{\l}$ then takes the form
\[
\gamma_{\l}(s) = \left( \frac{s}{\sqrt{2}}, \frac{s}{\sqrt{2}}, \theta_0, \phi_0 \right).
\]
The set $\Scal_s$ is then the set $\{ t = r = \frac{s}{\sqrt{2}} \}$, with induced metric
\[
\bs_s = \frac{1}{2} s^2 \left( d\theta^2 + \sin^2 \theta d\phi^2 \right).
\]
Note that we therefore have
\[
|\Scal_s|_{\g} = 2 \pi s^2.
\]
The spin-coefficients that are of concern to us take the form
\[
\rho(s) = - \frac{1}{s}, \qquad \sigma(s) = 0.
\]

\subsection{Manifolds with curvature bounds}
In the case where the Ricci coefficient $\Phi_{00}$ is bounded below, then we may treat equation~\eqref{NPrho} by scalar Riccati techniques.
\begin{proposition}
If $\Phi_{00} \ge c$, then
\bel{rholowerbound}
\rho(s) \ge - \frac{\sn_c^{\prime}(s)}{\sn_c(s)}.
\ee
\end{proposition}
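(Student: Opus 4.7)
The plan is to mirror the scalar Riccati argument from the first half of the proof of Proposition~\ref{prop:RiccComp}, applied directly to the Newman--Penrose equation~\eqref{NPrho}. The key observation is that the shear term $\sigma \overline{\sigma} = |\sigma|^2$ is non-negative and can therefore be dropped from the right-hand side of~\eqref{NPrho}, reducing the analysis to a purely scalar Riccati inequality for $\rho$ alone, with the curvature bound $\Phi_{00} \ge c$ providing the required lower bound.

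More concretely, dropping the non-negative shear contribution and inserting the hypothesis $\Phi_{00} \ge c$ into~\eqref{NPrho} yields
\[
\frac{d\rho}{ds} \ge \rho^2 + c.
\]
Setting $u(s) := - \rho(s)$, this becomes $u'(s) + u(s)^2 \le - c$, which is precisely the scalar Riccati inequality~\eqref{Hinequality} appearing in the proof of Proposition~\ref{prop:RiccComp}. The Minkowski calculation recalled just above the statement gives $\rho(s) = - 1/s$ in the flat case, so along an arbitrary null generator we have the asymptotic normalisation $s \cdot u(s) \to 1$ as $s \to 0$.

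Writing $u = a'/a$ with $a(0) = 0$ and $a'(0) = 1$ then converts the Riccati inequality into the linear second-order inequality $a''(s) + c \, a(s) \le 0$, while $\sn_c$ satisfies $\sn_c''(s) + c \, \sn_c(s) = 0$ with the same initial conditions. The Wronskian $a'(s) \sn_c(s) - \sn_c'(s) a(s)$ therefore has non-positive derivative and vanishes at $s = 0$, so it stays non-positive; dividing by $a(s) \sn_c(s) > 0$ gives $u(s) \le \sn_c'(s)/\sn_c(s)$, which is exactly the claimed bound~\eqref{rholowerbound}.

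There is no real obstacle here: the substantive inputs---dropping the $|\sigma|^2$ term, the scalar Riccati comparison, and the Minkowski-model asymptotics that fix the boundary data at $s = 0$---have all been assembled earlier in the paper. The only mild subtlety worth flagging is the identification of $- \rho$ with the mean-curvature-like quantity $H$ of Section~\ref{sec:Riccati}, which justifies reusing the Riccati comparison machinery verbatim; this follows from the Newman--Penrose normalisations laid out in the Minkowski subsection (in particular $\rho(s) \to - 1/s$), and requires no new computation.
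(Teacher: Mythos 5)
Your proof is correct and follows essentially the same route as the paper: drop the non-negative shear term $\sigma\overline{\sigma}$ from~\eqref{NPrho}, use $\Phi_{00}\ge c$ to obtain $\rho'\ge\rho^2+c$, substitute $\rho=-a'/a$ (your intermediate $u=-\rho$ is the same thing in one extra step), and close with the Sturm/Wronskian comparison against $\sn_c$. The only cosmetic difference is that the paper writes the substitution directly in $\rho$ and cites the earlier Riccati comparison implicitly, whereas you spell it out; mathematically the arguments coincide.
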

\begin{proof}
If $\Phi_{00} \ge c$, then, denoting $\tfrac{d}{ds}$ by ${}'$, we have, from~\eqref{NPrho},
\[
\rho' \ge \rho^2 + c.
\]
Letting $\rho(s) = - a'(s) / a(s)$, where $a(0) = 0$, $a'(0) = 1$, then we have
\[
a^{\prime\prime}(s) + c a(s) \le 0.
\]
Therefore,
\[
\frac{a'(s)}{a(s)} \le \frac{\sn_c^{\prime}(s)}{\sn_c(s)},
\]
and, hence,
\[
\rho(s) \ge - \frac{\sn_c^{\prime}(s)}{\sn_c(s)}.
\]
\end{proof}

Alternatively, we may treat equations~\eqref{NP} together by matrix Riccati techniques if we have an upper bound on the curvature. As in~\cite[Chapter~7]{PR}, we define the $2 \times 2$ complex matrices
\[
P :=
\begin{pmatrix} \rho &\sigma \\ \overline{\sigma} &\rho
\end{pmatrix}, \qquad
Q :=
\begin{pmatrix} \Phi_{00} &\Psi_0 \\ \overline{\Psi}_0 &\Phi_{00}
\end{pmatrix}.
\]
We then have the following:

\begin{proposition}
Let $\Lambda$ be a real constant such that
\bel{Qbound}
\Phi_{00} + |\Psi_0| \le \Lambda.
\ee
Then
\bel{PUpperBound}
P(s) \le - \frac{\sn_{\Lambda}^{\prime}(s)}{\sn_{\Lambda}(s)} \,\Id,
\ee
in the sense that the eigenvalues of the operator $P(s)$ are bounded above by $- \frac{\sn_{\Lambda}^{\prime}(s)}{\sn_{\Lambda}(s)}$, i.e.
\[
\rho(s) \pm |\sigma(s)| \le -
\frac{\sn_{\Lambda}^{\prime}(s)}{\sn_{\Lambda}(s)}.
\]
\end{proposition}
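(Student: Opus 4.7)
The strategy is to assemble the two Newman--Penrose equations \eqref{NPrho}, \eqref{NPsigma} into a matrix Riccati equation for $P$, and then compare $P(s)$ with the model scalar solution $f(s)\Id$, where $f(s) := -\sn_\Lambda'(s)/\sn_\Lambda(s)$. A direct computation using $\overline{\rho} = \rho$ gives $(P^2)_{11} = (P^2)_{22} = \rho^2 + \sigma\overline{\sigma}$ and $(P^2)_{12} = 2\rho\sigma$, $(P^2)_{21} = 2\rho\overline{\sigma}$, so that \eqref{NPrho}--\eqref{NPsigma} are equivalent to
\[
P'(s) = P(s)^2 + Q(s).
\]
The scalar function $f$ obeys $f' = f^2 + \Lambda$, hence $f\Id$ solves the analogous matrix Riccati equation with source $\Lambda\Id$. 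Since $Q$ is Hermitian with eigenvalues $\Phi_{00} \pm |\Psi_0|$, the hypothesis \eqref{Qbound} translates into the Hermitian matrix bound $Q(s) \le \Lambda\Id$ for all $s$.

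I would then establish \eqref{PUpperBound} by bounding the top eigenvalue $\lambda_+(s) := \rho(s) + |\sigma(s)|$ of $P(s)$ via a min-max argument that closely mirrors the proof of the second part of Proposition~\ref{prop:RiccComp}. Fix $t > 0$ and let $\mathbf{v}_+ \in \mathbb{C}^2$ be a unit eigenvector of $P(t)$ for $\lambda_+(t)$. The scalar function $\Lambda_+(s) := \mathbf{v}_+^{*} P(s) \mathbf{v}_+$ then satisfies $\Lambda_+(s) \le \lambda_+(s)$ for all $s$, with equality at $s = t$, so $\lambda_+ - \Lambda_+$ has a local minimum at $t$ and, where the eigenvalues are smooth, their derivatives coincide there. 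Inserting the matrix Riccati equation together with $\mathbf{v}_+^{*} Q(t) \mathbf{v}_+ \le \Lambda$ then produces the scalar differential inequality
\[
\lambda_+'(s) \le \lambda_+(s)^2 + \Lambda.
\]

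To close the argument, I would apply the Sturm comparison used throughout Section~\ref{sec:Riccati}: writing $\lambda_+(s) = -u'(s)/u(s)$ with initial data $u(0) = 0$, $u'(0) = 1$ dictated by the Minkowskian asymptotics $\rho(s) \sim -1/s$ and $\sigma(s) \to 0$, the inequality above becomes $u'' + \Lambda u \ge 0$, and the Wronskian comparison with $\sn_\Lambda$ delivers $\lambda_+(s) \le -\sn_\Lambda'(s)/\sn_\Lambda(s)$. The other eigenvalue $\rho - |\sigma|$ of $P(s)$ is automatically bounded above by $\lambda_+$, so \eqref{PUpperBound} follows at once. The main technical point to watch is the non-smoothness of $\lambda_+$ where the two eigenvalues of $P$ coincide --- in particular near $s = 0$, where $\sigma \to 0$ and both eigenvalues collapse onto $\rho$; as in the proof of Proposition~\ref{prop:RiccComp}, this is handled by the standard barrier method alluded to in its footnote, and the asymptotic condition $sP(s) \to -\Id$ provides the correct initial data for the scalar Sturm comparison.
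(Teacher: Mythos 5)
Your proposal is correct and follows essentially the same route as the paper, which simply observes that the Newman--Penrose equations assemble into the matrix Riccati equation $P' = P^2 + Q$ with $P$, $Q$ Hermitian and then defers to the eigenvalue min--max argument of the second part of Proposition~\ref{prop:RiccComp}. You have merely spelled out the details that the paper leaves implicit (the computation of $P^2$, the identification of the eigenvalues of $Q$, the Sturm comparison, and the remark about non-smoothness of $\lambda_+$), and the small simplification you note --- that the test vector $\mathbf{v}_+$ is genuinely constant in $\mathbb{C}^2$, so no analogue of the $\langle \mathbf{V}'(t),\mathbf{v}_1(t)\rangle$ term arises --- is accurate.
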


\begin{proof}
The Newman--Penrose equations~\eqref{NPrho} and~\eqref{NPsigma} may be written in the form
\bel{MatrixRiccP}
\frac{d}{ds}P = P^2 + Q.
\ee
Since the matrices $P$ and $Q$ are Hermitian with respect to the standard inner product on $\mathbb{C}^2$, they both have real eigenvalues. The proof then follows the same strategy as in the second part of Proposition~\ref{prop:RiccComp}.
\end{proof}

\subsection{Areas}
For $s > 0$, we denote the area of the sphere $\Scal_s$ with respect to the volume form defined by $\bs_s$ by
\[
|\Scal_s|_{\g} := \int_{\Scal_s} \, dV_{\bs_s}.
\]
We then have
\[
\frac{d}{ds} |\Scal_s|_{\g} = - 2 \int_{\Scal_s} \rho \, dV_{\bs_s}.
\]

\begin{corollary}
Let $c$ be a constant such that $\Phi_{00} \ge c$. Then the ratio
\[
|\Scal_s|_{\g}/(2 \pi\,\sn_c(s)^2)
\]
is non-increasing as a function of $s$, and converges to $1$ as $s \to 0$. In particular,
\bel{VolumeInequality1}
|\Scal_s|_{\g} \le 2 \pi \,\sn_c(s)^2.
\ee
Similarly, let $\Lambda$ be a constant such that~\eqref{Qbound} holds. Then $|\Scal_s|_{\g}/(2 \pi\,\sn_{\Lambda}(s)^2)$ is non-decreasing as a function of $s$, and converges to $1$ as $s \to 0$. Therefore, we have
\bel{VolumeInequality2}
|\Scal_s|_{\g} \ge 2 \pi \,\sn_{\Lambda}(s)^2.
\ee
\end{corollary}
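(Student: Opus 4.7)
The plan is to differentiate the ratio $|\Scal_s|_{\g}/(2\pi\,\sn_\kappa(s)^2)$ with $\kappa\in\{c,\Lambda\}$ and show the appropriate sign of the derivative using the bounds on $\rho$ already established in the two preceding propositions.

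First I would write the logarithmic derivative
\[
\frac{d}{ds}\log\!\left(\frac{|\Scal_s|_{\g}}{2\pi\,\sn_\kappa(s)^2}\right)
= -\frac{2}{|\Scal_s|_{\g}}\int_{\Scal_s} \rho\, dV_{\bs_s} - 2\,\frac{\sn_\kappa'(s)}{\sn_\kappa(s)},
\]
using the formula $\frac{d}{ds}|\Scal_s|_{\g} = -2\int_{\Scal_s}\rho\,dV_{\bs_s}$ stated just above the corollary, together with $\frac{d}{ds}\log \sn_\kappa(s)^2 = 2\sn_\kappa'(s)/\sn_\kappa(s)$. The problem thus reduces to two pointwise inequalities for $\rho$.

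For the first claim, the hypothesis $\Phi_{00}\ge c$ and the preceding proposition give $\rho(s)\ge -\sn_c'(s)/\sn_c(s)$, so $-\rho(s)\le \sn_c'(s)/\sn_c(s)$ pointwise on $\Scal_s$. Integrating over $\Scal_s$ and dividing by $|\Scal_s|_{\g}$, the displayed expression is $\le 0$, which yields the non-increasing property. For the second claim, the bound $\Phi_{00}+|\Psi_0|\le \Lambda$ together with the matrix Riccati proposition gives $\rho(s)+|\sigma(s)|\le -\sn_\Lambda'(s)/\sn_\Lambda(s)$; since $|\sigma(s)|\ge 0$, this implies $\rho(s)\le -\sn_\Lambda'(s)/\sn_\Lambda(s)$, hence $-\rho(s)\ge \sn_\Lambda'(s)/\sn_\Lambda(s)$, which makes the displayed expression $\ge 0$ and yields the non-decreasing property.

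To fix the relative normalisation, I would compare the $s\to 0$ asymptotics: the Minkowski subsection records $|\Scal_s|_{\g}\to 2\pi s^2$ for the limiting behaviour of any Lorentzian manifold (since $\rho\sim -1/s$, $\sigma\to 0$ at the vertex as in flat space), while $\sn_\kappa(s)^2\to s^2$. Both monotonicity statements then upgrade to the inequalities \eqref{VolumeInequality1} and \eqref{VolumeInequality2} because a non-increasing (resp.\ non-decreasing) function that tends to $1$ at the origin is bounded above (resp.\ below) by $1$. The only mildly delicate point is the asymptotic normalisation; once that is in place the proof is a routine consequence of the two pointwise Riccati comparisons already established.
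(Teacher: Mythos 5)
Your argument is correct and follows the same route as the paper: use $\frac{d}{ds}|\Scal_s|_{\g} = -2\int_{\Scal_s}\rho\,dV_{\bs_s}$, feed in the pointwise Riccati bounds on $\rho$, conclude the sign of the logarithmic derivative of the ratio, and fix the normalisation via the $s\to0$ asymptotics $|\Scal_s|_{\g}\sim 2\pi s^2$ and $\sn_\kappa(s)\sim s$. Your remark that the eigenvalue bound $\rho+|\sigma|\le -\sn_\Lambda'/\sn_\Lambda$ together with $|\sigma|\ge 0$ yields $\rho\le -\sn_\Lambda'/\sn_\Lambda$ is a small clarification the paper leaves implicit, but the proof is otherwise the same.
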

\begin{proof}
We have
\[
\frac{d}{ds} |\Scal_s|_{\g} = - 2 \int_{\Scal_s} \rho \, dV_{\bs_s} \le 2
\frac{\sn_c^{\prime}(s)}{\sn_c(s)} |\Scal_s|_{\g}.
\]
Therefore,
\[
\frac{d}{ds} \log \left( \frac{|\Scal_s|_{\g}}{\sn_c(s)^2} \right) \le 0,
\]
and, hence, the ratio $|\Scal_s|_{\g}/\sn_c(s)^2$ is non-increasing. As $s \to 0$, $|\Scal_s|_{\g}$ approaches the flat-space value $2 \pi s^2$ and $\frac{1}{s} \sn_c(s) \to 1$, so
\[
\lim_{s \to 0} \frac{|\Scal_s|_{\g}}{\sn_c(s)^2} = 2 \pi.
\]
Combining the monotonicity result with this limiting result gives~\eqref{VolumeInequality1}.

A similar argument, using the fact that $\rho \le - \sn_{\Lambda}^{\prime}(s) / \sn_{\Lambda}(s)$, gives the second result.
\end{proof}

As in the main part of the paper, the volume results then follow directly from the area monotonicity properties.


\begin{thebibliography}{99}

\bibitem{AB}
{\sc S.~B. Alexander and R.~L. Bishop}, {\em Lorentz and semi-{R}iemannian
spaces with {A}lexandrov curvature bounds}, Comm. Anal. Geom., 16 (2008),
pp.~251--282.

\bibitem{CE}
{\sc J.~Cheeger and D.~G. Ebin}, {\em Comparison theorems in {R}iemannian geometry}, North-Holland Publishing Co., Amsterdam, 1975.
\newblock North-Holland Mathematical Library, Vol. 9.

\bibitem{CGT}
{\sc J.~Cheeger, M.~Gromov, and M.~Taylor}, {\em Finite propagation speed,
kernel estimates for functions of the {L}aplace operator, and the geometry of
complete {R}iemannian manifolds}, J. Differential Geom., 17 (1982),
pp.~15--53.

\bibitem{LightCone}
{\sc Y.~Choquet-Bruhat, P.~T. Chru{\'s}ciel, and J.~M.
Mart{\'{\i}}n-Garc{\'{\i}}a}, {\em The light-cone theorem}, Classical Quantum
Gravity, 26 (2009), pp.~135011, 22.

\bibitem{EhrlichKim}
{\sc P.~E. Ehrlich and S.-B. Kim}, {\em Riccati and index comparison methods in
{L}orentzian and {R}iemannian geometry}, in Advances in {L}orentzian
geometry, Shaker Verlag, Aachen, 2008, pp.~63--75.

\bibitem{ES}
{\sc P.~E. Ehrlich and M.~S{\'a}nchez}, {\em Some semi-{R}iemannian volume
comparison theorems}, Tohoku Math. J. (2), 52 (2000), pp.~331--348.

\bibitem{Gromov:Book}
{\sc M.~Gromov}, {\em Metric structures for {R}iemannian and non-{R}iemannian
spaces}, vol.~152 of Progress in Mathematics, Birkh\"auser Boston Inc.,
Boston, MA, 1999.

\bibitem{Harris}
{\sc S.~G. Harris},
{\em A triangle comparison theorem for {L}orentz manifolds},
Indiana Univ. Math. J., 31 (1982), pp.~289--308.

\bibitem{HE}
{\sc S.~W. Hawking and G.~F.~R. Ellis}, {\em The large scale structure of
space-time}, Cambridge University Press, London, 1973.

\bibitem{KR:Inventiones}
{\sc S.~Klainerman and I.~Rodnianski}, {\em Causal geometry of
{E}instein-vacuum spacetimes with finite curvature flux}, Invent. Math., 159
(2005), pp.~437--529.

\bibitem{PR}
{\sc R.~Penrose and W.~Rindler}, {\em Spinors and space-time}, two volumes,
Cambridge University Press, Cambridge, 1987 \& 1988.

\bibitem{Petersen}
{\sc P.~Petersen}, {\em Riemannian geometry}, vol.~171 of Graduate Texts in
Mathematics, Springer, New York, 2006.

\end{thebibliography}
\end{document}